\documentclass[final,twoside,11pt]{entics}
\usepackage{enticsmacro}
\usepackage{graphicx}
\usepackage[all]{xy}
\usepackage{amsmath}
\usepackage{mathtools}
\usepackage{relsize}
\usepackage{tikz-cd}
\usepackage{cite}
\usepackage{inconsolata}
\usepackage{scalefnt}

\sloppy


\newcommand{\UU}{\mathcal{U}}
\newcommand{\VV}{\mathcal{V}}
\newcommand{\WW}{\mathcal{W}}
\newcommand{\TT}{\mathcal{T}}
\newcommand{\KK}{\mathcal{K}}

\newcommand{\blank}{{-}}

\newcommand{\setof}[1]{\left\{ #1 \right\}}

\newcommand{\IsPropNm}{\mathsf{isProp}}
\newcommand{\IsProp}[1]{\IsPropNm\left(#1\right)}

\newcommand{\ttmodule}[1]{\href{https://www.cs.bham.ac.uk/~mhe/TypeTopology/#1.html}{\texttt{#1}}}

\newcommand{\is}{\coloneqq}

\newcommand{\comment}[1]{}

\newcommand{\trunc}[1]{\left\| #1 \right\|}
\newcommand{\abs}[1]{\left| #1 \right|}

\newcommand{\closednucl}[1]{`#1\textrm'}
\newcommand{\opennucl}[1]{\neg`#1\textrm'}

\newcommand{\EmptyTy}{\mathbf{0}}
\newcommand{\UnitTy}{\mathbf{1}}
\newcommand{\NatTy}{\mathbb{N}}
\newcommand{\ListTy}[1]{\mathsf{List}(#1)}

\newcommand{\ddnm}{\mathfrak{o}}
\newcommand{\dd}[1]{\ddnm(#1)}

\newcommand{\Spec}{\mathbf{Spec}}
\newcommand{\Stone}{\mathbf{Stone}}
\newcommand{\Frm}{\mathbf{Frm}}
\newcommand{\Loc}{\mathbf{Loc}}

\newcommand{\McK}{\mathcal{K}}

\newcommand{\emptyl}{\varepsilon}
\DeclareMathOperator{\cons}{\mathbf{\colon\kern-1.0ex\colon\kern-0.3ex}}
\DeclareMathOperator{\append}{}

\newcommand{\define}[1]{\emph{#1}}

\newcommand{\paren}[1]{\left( #1 \right)}

\newcommand{\opens}[1]{\mathcal{O}(#1)}

\usepackage{scalerel}[2016/12/29]

\newcommand\bigexistsp{%
  \mathop{\lower1.2ex\hbox{\ensuremath{%
        \mathlarger{\mathlarger{\mathlarger{\scaleobj{1.2}{\exists}}}}}}}%
  \limits}

\newcommand\ScaleExists[1]{\vcenter{\hbox{\scalefont{#1}$\exists$}}}

\DeclareMathOperator*\bigexists{%
  \vphantom\sum
  \mathchoice{\ScaleExists{2}}{\ScaleExists{1.4}}{\ScaleExists{1}}{\ScaleExists{0.75}}}

\newcommand{\FamNm}{\mathsf{Fam}}
\newcommand{\Fam}[2]{\FamNm_{#1}\left(#2\right)}

\newcommand{\Patch}{\mathsf{Patch}}
\newcommand{\opposite}[1]{{#1}^{\mathsf{op}}}

\newcommand{\WayBelow}{\ll}
\newcommand{\WellInside}{\eqslantless}

\volume{1}			



\begin{document}
\begin{frontmatter}
  \title{Patch Locale of a Spectral Locale\\ in\\ Univalent Type Theory}
  \author{Ayberk Tosun\thanksref{a}\thanksref{myemail}}	
  \thanks[myemail]{%
    Email:
     \href{mailto:a.tosun@pgr.bham.ac.uk}{\texttt{\normalshape a.tosun@pgr.bham.ac.uk}}
  }
  \author{Mart\'in H.\ Escard\'o\thanksref{a}\thanksref{coemail}}
  \address[a]{School of Computer Science\\ University of Birmingham\\ Birmingham, United Kingdom}
  \thanks[coemail]{Email:  \href{mailto:m.escardo@cs.bham.ac.uk} {\texttt{\normalshape
        m.escardo@cs.bham.ac.uk}}}
\begin{abstract}
  Stone locales together with continuous maps form a coreflective subcategory of
  spectral locales and perfect maps. A proof in the internal language of an
  elementary topos was previously given by the second-named author. This proof
  can be easily translated to univalent type theory using \emph{resizing
  axioms}. In this work, we show how to achieve such a translation
  \emph{without} resizing axioms, by working with large, locally small, and
  small complete frames with small bases. This turns out to be nontrivial and
  involves predicative reformulations of several fundamental concepts of locale
  theory.
\end{abstract}
\begin{keyword}
  locale theory, pointfree topology, patch locale, spectral locale, stone space,
  univalent type theory
\end{keyword}
\end{frontmatter}

\section{Introduction}\label{sec:intro}

The category $\Stone$ of Stone locales together with continuous maps forms a
coreflective subcategory of the category $\Spec$ of spectral locales and
\emph{perfect} maps i.e.\ maps preserving compact opens. A proof in the internal
language of an elementary topos was previously constructed in~\cite{patch-short,
  patch-full}, defining the patch frame as the frame of Scott continuous nuclei
on a given frame.

The objective of this paper is to carry out this construction in predicative,
constructive univalent foundations. In the presence of Voevodsky's resizing
axioms~\cite{voevodsky-resizing}, it is straightforward to translate the above
proof to univalent type theory. However, at the time of writing, there is no
known constructive interpretation of the resizing axioms. In such a predicative
situation, the usual approach to locale theory is to work with presentations of
locales, known as formal topologies~\cite{int-formal-spaces, coq-sambin,
  coquand-tosun}. However, we show that it is possible to work with locales
directly, if we adopt large, locally small, and small complete frames with small
bases~\cite{predicative-aspects}. This requires a number of substantial
modifications to the proofs and constructions of~\cite{patch-short, patch-full}:
\begin{enumerate}
  \item The patch is defined as the frame of Scott continuous nuclei
    in~\cite{patch-short, patch-full}. In order to prove that this is indeed a
    frame, one starts with the frame of all nuclei, and then exhibits the
    Scott continuous nuclei as a subframe. However, this procedure does not seem
    to be possible in our predicative setting as it is not clear whether all
    nuclei form a frame; so we construct the frame of Scott continuous nuclei
    \emph{directly}, which requires reformulations of all proofs about it
    inherited from the frame of all nuclei.
  \item In the impredicative setting, any frame has all Heyting implications,
    which is needed to construct open nuclei. Again, this does not seem to be
    the case in the predicative setting. We show, however, that it is possible
    to construct Heyting implications in locally small frames with small bases,
    by an application of the Adjoint Functor Theorem \label{item:aft} for posets.
 \item Similar to (\ref{item:aft}), we use the Adjoint Functor Theorem for posets to define
   the right adjoint of a frame homomorphism, using which we define the notion
   of a \emph{perfect map}, namely, a map whose defining frame homomorphism's
   right adjoint is Scott continuous. This notion is used in \cite{patch-short,
   patch-full}.
\end{enumerate}

For the purposes of this work, a \emph{spectral locale} is a locale in which the
compact opens form a \emph{small basis} closed under finite meets. A continuous
map of spectral locales is \emph{spectral} if its defining frame homomorphism
preserves compact opens. A \emph{Stone locale} is one that is compact and
zero-dimensional (i.e.\ whose clopens form a basis). Every Stone locale is
spectral since the clopens coincide with the compact opens in Stone locales. The
patch frame construction is the right adjoint to the inclusion $\Stone
\hookrightarrow \Spec$. The main contribution of our work is the construction of
this right adjoint in the predicative context of univalent type theory. We have
also formalised the development of this paper in the \textsc{Agda} proof
assistant~\cite{agda}, though our presentation here is self-contained and can be
followed independently of the formalisation. Although we have omitted some
proofs for lack of space, we have included all the crucial differences
from~\cite{patch-short, patch-full} in full.

The organisation of this paper is as follows. In Section~\ref{sec:foundations},
we present the type-theoretical context in which we work. In
Section~\ref{sec:spec-and-stone}, we present our definitions of spectral and
Stone locales that provide a suitable basis for a predicative development. In
Section~\ref{sec:aft}, we present a predicative version of the Adjoint Functor
Theorem for the simplified context of locales that is central to our
development. In Section~\ref{sec:meet-semilattice}, we define the
meet-semilattice of perfect nuclei as preparation for the complete lattice of
perfect nuclei, which we then construct in Section~\ref{sec:joins}. Finally in
Section~\ref{sec:coreflection}, we prove the desired universal property, namely,
that the patch locale exhibits the category $\Stone$ as a coreflective
subcategory of $\Spec$.

\section{Foundations}\label{sec:foundations}

In this section, we present the type-theoretical setting in which we work and
then provide the type-theoretical formulations of some of the preliminary
notions that form the basis of our work. Our type-theoretical conventions follow
those of de Jong and Escard\'o \cite{dejong-escardo-domains} and the Univalent
Foundations Programme \cite{hottbook}.

We work in Martin-L\"of Type Theory with binary sums $\blank + \blank$,
dependent products $\prod$, dependent sums $\sum$, the identity type $\blank = \blank$,
and inductive types including the empty type $\EmptyTy$, the unit type
$\UnitTy$, and the type $\ListTy{A}$ of lists over any type $A$. We
adhere to the convention of \cite{hottbook} of using $\blank \equiv \blank$ for
judgemental equality and $\blank = \blank$ for the identity type.

We work explicitly with universes, for which we adopt the convention of using
the variables $\UU, \VV, \WW$, and $\TT$. The ground universe is denoted $\UU_0$
and the successor of a given universe $\UU$ is denoted $\UU^+$. The least upper
bound of two universes is given by the operator $\blank \sqcup \blank$ which is
assumed to be associative, commutative, and idempotent. Furthermore, $(\blank)^+$
is assumed to distribute over $\blank \sqcup \blank$. Universes are computed for the
given type formers as follows:
\begin{itemize}
  \item Given types $X : \UU$ and $Y : \VV$, the type $X + Y$ inhabits universe
    $\UU \sqcup \VV$.
  \item Given a type $X : \UU$ and an $X$-indexed family, $Y : X \rightarrow \VV$, both
    $\sum_{x : X}Y(x)$ and $\prod_{x : X}Y(x)$ inhabit the universe $\UU \sqcup \VV$.
  \item Given a type $X : \UU$ and inhabitants $x, y : X$, the identity type
    $x = y$ inhabits universe $\UU$.
  \item The type $\NatTy$ of natural numbers inhabits $\UU_0$.
  \item The empty type $\EmptyTy$ and the unit type $\UnitTy$ have copies in
    every universe $\UU$, which we occasionally make explicit using the
    notations $\EmptyTy_{\UU}$ and $\UnitTy_{\UU}$.
  \item Given a type $A : \UU$, the type $\ListTy{A}$ inhabits $\UU$.
\end{itemize}

We assume only function extensionality, propositional extensionality and quotients, and do
not need full univalence for our development. We always maintain a distinction
between structure and property, and reserve logical connectives for
propositional types i.e.\ types $A$ satisfying $\IsProp{A} \is \prod_{x, y : A} x =
y$. We denote by $\Omega_\UU$ the type of propositional types in universe $\UU$
i.e.\ $\Omega_\UU \is \Sigma_{A : \UU} \IsProp{A}$.

We assume the existence of propositional truncation, given by a type former
$\trunc{\blank} : \UU \rightarrow \UU$ and a unit operation $\abs{\blank} : A \rightarrow
\trunc{A}$. The existential quantification operator is defined using
propositional truncation as:
\begin{equation*}
  \bigexists_{x : A} B(x) \quad\is\quad \trunc{\sum_{x : A}B(x)}.
\end{equation*}

When presenting proofs informally, we adopt the following conventions for
avoiding ambiguity between propositional and non-propositional types:
\begin{itemize}
  \item For the anonymous inhabitation $\abs{A}$ of a type, we say that $A$ is
    inhabited;
  \item For truncated $\Sigma$ types, we use the terminologies \emph{there is}
    and \emph{there exists}.
\end{itemize}

\subsection{Directed families}

We now proceed to define some preliminary notions in the type-theoretical
setting that we have just presented.

\begin{defn}[Family]\label{defn:family}
  A \define{$\UU$-family on a type} $A$ is a pair $(I, f)$ where $I : \UU$ and
  $f : I \rightarrow A$. We denote the type of $\UU$-families on type $A$ by
  $\FamNm_\UU(A)$ i.e.\ $\Fam{\UU}{A} \is \sum_{(I : \UU)} I \rightarrow A$.
\end{defn}

We often use the shorthand $\{ x_i \}_{i : I}$ for families. In other words,
instead of writing $(I, f)$ for a family, we write $\{ x_i \}_{i : I}$ where
$x_i$ denotes the application $f(i)$.

\begin{defn}[Subfamily]
  By a \define{subfamily} of some $\UU$-family $(I, f)$ we mean a family $(J, f
  \circ g)$ where $(J, g)$ is itself a $\UU$-family on $I$.
\end{defn}

When considering a subfamily $J$ of some family $\{ x_i \}_{i : I}$, we often
use the abbreviation $\{ x_j \mid j \in J \}$.

As mentioned in the introduction, Scott continuity plays a central role in our
development. To define Scott continuity, we define the notion of a directed
family. The definition that we work with (also used by de Jong and
Escard\'o~\cite{dejong-escardo-domains}) is the following:

\begin{defn}[Directed family]\label{defn:directed}
  Let $\{ x_i \}_{i : I}$ be a family in some type $A$ that is equipped with a
  preorder $\blank \le \blank$. The family $\{ x_i \}_{i : I}$ is called directed
  if (1) $I$ is inhabited, and (2) for every $i, j : I$, there exists some $k :
  I$ such that $x_k$ is the upper bound of $\{ x_i, x_j \}$.
\end{defn}

\subsection{Definition of locale}

A locale is a notion of space characterised solely by its frame of opens. Our
definition of a frame is parameterised by three universes: (1) for the carrier
set, (2) for the order, and (3) for the index types of families on which the join
operation is defined. We adopt the convention of using the universe variables
$\UU$, $\VV$, and $\WW$ for these respectively. We often omit universe levels in
contexts where they are not relevant to the discussion. In cases where only the
index universe $\WW$ is relevant, we speak of a $\WW$-locale for the sake of
brevity and omit universes $\UU$ and $\VV$.

\begin{defn}[Frame]
  A \define{$(\UU, \VV, \WW)$-frame} $L$ consists of:
  \begin{itemize}
    \item a set $| L | : \UU$,
    \item a partial order $\blank \le \blank : | L | \rightarrow | L | \rightarrow \Omega_\VV$,
    \item a top element $\top : | L |$,
    \item an operation $\blank \wedge \blank : | L | \rightarrow | L | \rightarrow | L |$ giving the
      greatest lower bound $U \wedge V$ of any two $U, V : | L |$,
    \item an operation $\bigvee\_ : \Fam{\WW}{| L |} \rightarrow | L |$ giving the least upper
      bound $\bigvee_{i : I} U_i$ of any $\WW$-family $\{ U_i \}_{i : I}$,
  \end{itemize}
  such that binary meets distribute over arbitrary joins, i.e.
  \begin{equation*}
    U \wedge \bigvee_{i : I} V_i = \bigvee_{i : I} U \wedge V_i
  \end{equation*}
  for every $U : | L |$ and $\WW$-family $\{ V_i \}_{i : I}$ in $| L |$.
\end{defn}

It follows automatically from the antisymmetry condition for partial orders that
the underlying type of a frame is a set. Finally, we note that most of our
results are restricted to $(\UU^+, \UU, \UU)$-frames for a fixed universe $\UU$,
which we refer to as \emph{large, locally small, and small complete} frames.
Even though some of our results apply to frames of a more general form, we
refrain from presenting the specific level of generality for the sake of
brevity. For the precise universe levels, we refer the reader to the
formalisation.

\begin{defn}[Frame homomorphism]
  Let $K$ and $L$ be a $(\UU, \VV, \WW)$-frame and a $(\UU', \VV', \WW)$-frame
  respectively. A function $h : |K| \rightarrow |L|$ is called a \define{frame
    homomorphism} if it preserves the top element, binary meets, and joins of
  $\WW$-families. We denote the category of frames and their homomorphisms by
  $\Frm$.
\end{defn}

We adopt the notational conventions of~\cite{sheaves}. A \define{locale} is a
frame considered in the opposite category called $\Loc \is \opposite{\Frm}$. To
highlight this, we adopt the standard convention of using the letters $X, Y, Z,
\ldots$ (or sometimes $A, B, C, \ldots$) for locales and denoting by $\opens{X}$ the frame
corresponding to a locale $X$. For variables that range over the frame of opens
of a locale $X$, we use the letters $U, V, W, \ldots$ We use the letters $f$ and $g$
for continuous maps $X \rightarrow Y$ of locales. A continuous map $f : X \rightarrow Y$ is given by
a frame homomorphism $f^* : \opens{Y} \rightarrow \opens{X}$.

\begin{defn}[Nucleus]\label{defn:nucleus}
  A \define{nucleus on a locale} $X$ is an endofunction $j : \opens{X} \to
  \opens{X}$ that is inflationary, idempotent, and preserves binary meets.
\end{defn}

In Section~\ref{sec:joins}, we will work with inflationary and
binary-meet-preserving functions that are not necessarily idempotent. Such
functions are called \emph{prenuclei}. We also note that, to show a prenucleus
$j$ to be idempotent, it suffices to show $j(j(U)) \le j(U)$ as the other
direction follows from inflationarity. In fact, the notion of a
nucleus could be defined as a prenucleus satisfying the inequality $j(j(U)) \le
j(U)$, but we define it as in Definition~\ref{defn:nucleus} for the sake of
simplicity and make implicit use of this fact in our proofs of idempotency.

\section{Spectral and Stone locales}\label{sec:spec-and-stone}

We start by defining the notion of a small basis for a frame. This is crucial
not just for the definitions of spectral and Stone locales that we use in our
development, but also for the Adjoint Functor Theorem that we present in
Section~\ref{sec:aft}.

\begin{defn}[Small basis]\label{defn:basis}
  Given a $\WW$-locale $X$, a $\WW$-family $\{ B_i \}_{i : I}$ of opens of $X$
  is said to \define{form a basis for $\opens{X}$} if
  \begin{equation*}
    \prod_{U : \opens{X}}
      \bigexists_{J : \Fam{\WW}{I}}
        U = \bigvee \{ B_j \mid j \in J \}.
  \end{equation*}
  A $\WW$-locale $X$ is then said to have a \define{small basis} if there exists
  a $\WW$-family $\{ B_i \}_{i : I}$ in $\opens{X}$ that forms a basis for
  $\opens{X}$.
\end{defn}

Given an open $U : \opens{X}$ with a small basis, we refer to the family $\{ B_j
\mid j \in J \}$ giving $U$ as its join as the \define{basic covering family for
$U$}.

It is important to note here that we use propositional truncation when defining
the notion of a locale having a basis. So even though we often speak of a
``locale with some small basis $\{ B_i \}_{i : I}$'', the existence of this
basis is a property meaning we have access to it only in contexts where the goal
is itself a proposition.

We often need covering families given by a basis to be directed. This is easy to
achieve if we work with bases closed under finite joins, which we can do without
loss of generality, as this closure produces another basis.

The standard impredicative definition of a spectral locale is as one in which
the compact opens form a basis closed under binary meets. To talk about
compactness, we define the \emph{way below} relation:

\begin{defn}[Way below]\label{defn:way-below}
  Given a $\WW$-locale $X$ and opens $U, V : \opens{X}$, $U$ is said to be
  \define{way~below} $V$, written $U \WayBelow V$, if
    \(\prod_{(I, f) : \Fam{\WW}{\opens{X}}}
      (I, f)\ \text{\textsf{directed}} \rightarrow V \le \bigvee (I, f) \rightarrow \bigexists_{i : I} U \le f(i)\).
\end{defn}

\begin{prop}
  Given any two opens $U$ and $V$ of a locale, the type $U \WayBelow V$ is a
  proposition.
\end{prop}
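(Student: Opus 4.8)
The plan is to unfold the definition of $U \WayBelow V$ and observe that it is assembled from a propositionally truncated type using only the type formers $\prod$ and $\rightarrow$, each of which preserves the property of being a proposition. Concretely, I would invoke the following standard closure facts, all available given the function extensionality assumed throughout the paper: (i) the propositional truncation $\trunc{A}$ is a proposition for every type $A$; (ii) if $B$ is a proposition, then so is $A \rightarrow B$, for any type $A$; and (iii) if $B(x)$ is a proposition for every $x : A$, then so is $\prod_{x : A} B(x)$. (Fact (ii) is of course the special case of (iii) with a constant family.)

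I would then work from the inside out. The innermost component $\bigexists_{i : I} U \le f(i)$ is by definition $\trunc{\sum_{i : I} U \le f(i)}$, hence a proposition by (i). Next, $V \le \bigvee (I, f) \rightarrow \bigexists_{i : I} U \le f(i)$ is an implication whose codomain has just been shown to be a proposition, so it is a proposition by (ii); applying the same argument once more, $(I, f)\ \textsf{directed} \rightarrow \paren{V \le \bigvee (I, f) \rightarrow \bigexists_{i : I} U \le f(i)}$ is a proposition. Finally, $U \WayBelow V$ is the dependent product of this last family over $(I, f) : \Fam{\WW}{\opens{X}}$, and so it is a proposition by (iii), completing the argument.

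I do not anticipate any genuine obstacle here: the statement is essentially bookkeeping about which type formers preserve propositions, and the only ingredient it relies on beyond the definitions is function extensionality, which is in force throughout the paper. It is worth noting that one never needs to use that $\blank \le \blank$ is $\Omega_\VV$-valued, nor that $\textsf{directed}(I,f)$ is a proposition — the truncation sitting at the bottom of the formula is precisely what makes the entire nested implication-and-product a proposition.
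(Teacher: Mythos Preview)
Your argument is correct and is precisely the standard one: the innermost existential is a truncation and hence a proposition, and then function extensionality guarantees that each successive $\rightarrow$ and $\prod$ preserves propositionality. The paper itself omits the proof entirely, treating the statement as routine; your write-up is exactly what one would expect such an omitted proof to unfold to.
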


The statement $U \ll V$ is thought of as expressing that $U$ is compact
\emph{relative to} $V$. An open is said to be compact if it is compact relative
to itself:

\begin{defn}[Compact open of a locale]\label{defn:compact-open}
  An open $U : \opens{X}$ is called \define{compact} if $U \WayBelow U$.
\end{defn}

We denote the type of compact opens of a locale $X$ by $\McK(X)$. We adopt the
convention of using letters $C, D, \ldots$ for compact opens.

\begin{defn}[Compact locale]\label{defn:compact-space}
  A locale $X$ is called \define{compact} if its top element $\top : \opens{X}$ is
  compact.
\end{defn}

The standard definition of a spectral locale as one in which the compact opens
form a basis closed under finite meets is problematic in our predicative
setting, as it is not always the case that the type of compact opens of a $(\UU,
\VV, \WW)$-locale lives in $\WW$. In particular, the type of compact opens of a
$(\UU^+, \UU, \UU)$-locale lives in $\UU^+$ and it is accordingly said to be
\emph{large}. To address this problem, we restrict attention to locales with
small bases and express the notion of spectrality by imposing the conditions of
interest on the basic elements instead.

\begin{defn}[Spectral locale]\label{defn:spectral-locale}
  A locale $X$ is said to be \emph{spectral} if there exists a small basis $\{
  B_i \}_{i : I}$ such that:
  \begin{enumerate}
    \item every $B_i$ is compact, \label{item:spec-comp} and
    \item $\{ B_i \}_{i : I}$ is closed under finite meets i.e.\ there is $t :
      I$ with $B_{t} = \top$ and for any two $i, j : I$, there is $k : I$ such
      that $B_k = B_i \wedge B_j$. \label{item:spec-closed}
  \end{enumerate}
\end{defn}

We have previously remarked that we can assume without loss of generality
that bases of locales are closed under finite joins. Note here that this
assumption can also be made for bases of spectral locales as compact opens are
also closed under finite joins.

Spectral locales together with spectral maps constitute the category
$\Spec$. We now define the notion of a spectral map.

\begin{defn}[Spectral map]\label{defn:spectral-map}
  A continuous map $f : X \rightarrow Y$ between spectral locales $X$ and $Y$ is called
  \emph{spectral} if $f^*(V) : \opens{X}$ is a compact open of $X$ whenever $V$
  is a compact open of $Y$.
\end{defn}

A natural question to ask about our definition of spectral locale is whether it
corresponds to the previous informal definition: can there be compact opens that
\emph{do not} fall in the basis?

\begin{prop}\label{prop:cmp-bsc}
  For any spectral locale $X$, every compact open of $X$ falls in the basis.
\end{prop}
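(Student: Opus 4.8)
The plan is to combine the directedness of basic covering families with the definition of compactness. Fix a spectral locale $X$ and a compact open $C : \opens{X}$. The conclusion that $C$ falls in the basis is a proposition, so we may assume given a small basis $\{B_i\}_{i : I}$ witnessing that $X$ is spectral and, as remarked after Definition~\ref{defn:spectral-locale}, we may further take $\{B_i\}_{i : I}$ to be closed under finite joins.

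First I would apply the basis property to $C$, obtaining a $\WW$-family $J : \Fam{\WW}{I}$ with $C = \bigvee\{B_j \mid j \in J\}$. Next I would replace this covering family by a \emph{directed} one: re-indexing by finite lists over the index type of $J$ and replacing each finite join $B_{j_1} \vee \cdots \vee B_{j_n}$ of members of $\{B_j \mid j \in J\}$ by a basic open equal to it (possible precisely because the basis is closed under finite joins), one obtains a directed $\WW$-family $\{B_k \mid k \in J'\}$ of basic opens with $\bigvee\{B_k \mid k \in J'\} = C$. This is exactly the directification of basic covering families that is used elsewhere in the paper.

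Since $C$ is compact we have $C \WayBelow C$. Applying the defining condition of $\WayBelow$ to the directed family $\{B_k \mid k \in J'\}$ and to the (in fact: equality) inequality $C \le \bigvee\{B_k \mid k \in J'\}$ yields some $k \in J'$ with $C \le B_k$. Conversely $B_k \le \bigvee\{B_k \mid k \in J'\} = C$, as $B_k$ is one of the joinands; hence $C = B_k$ by antisymmetry. Since $J'$ is a family on $I$, this produces an index $i : I$ with $C = B_i$, so $C$ falls in the basis.

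I expect the only delicate step to be the directification, and more precisely the check that it goes through predicatively: the new index type, consisting of lists over a type in $\WW$, must again live in $\WW$ (which holds by the universe rules for $\ListTy{\blank}$), and the empty list must be handled --- i.e.\ the empty join $\bot$ must occur among the basic opens, which is unproblematic since $\bot$ is compact and the closure of a basis under finite joins is taken to include the empty one. After that, the argument is just an application of compactness and antisymmetry.
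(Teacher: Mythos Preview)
Your proof is correct and follows essentially the same route as the paper: obtain a basic covering family for $C$, use closure of the basis under finite joins to make it directed, apply compactness to extract a single basic open above $C$, and conclude by antisymmetry. The only difference is cosmetic: the paper invokes its earlier remark that covering families can be taken directed once the basis is closed under finite joins, whereas you spell out the directification by lists explicitly; the underlying argument is identical.
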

\begin{proof}
  Let $X$ be a spectral locale and denote by $\{ B_i \}_{i : I}$ its basis
  closed under finite joins. Let $C : \opens{X}$ be a compact open and let $\{
  B_j \}_{j \in J}$ be the covering family for $C$. Because the basis is closed
  under finite joins, this family is directed. As $C \le \bigvee_{i : I} B_i$ there must
  be some $k : I$ by the compactness of $C$ such that $C \le B_k$. It is also
  clearly the case that $B_k \le C$ and so $B_k = C$, meaning $C$ falls in the
  basis.
\end{proof}

\subsection{Zero-dimensional and regular locales}

Clopenness is central to the notion of a zero-dimensional locale, similar to the
fundamental role played by the notion of a compact open in the definition of a
spectral locale. To define the clopens, we first define the \emph{well inside}
relation.

\begin{defn}[Well inside relation]\label{defn:well-inside}
  Given a locale $X$ and opens $U, V : \opens{X}$, $U$ is said to be
  \define{well inside} $V$ (written $U \WellInside V$) if
  \begin{equation*}
    \bigexists_{W : \opens{X}} \left(U \wedge W = \bot\right)\ \times\ \left(V \vee W = \top\right).
  \end{equation*}
\end{defn}

\begin{defn}[Clopen]\label{defn:clopen}
  An open $U$ is called a \define{clopen} if it is well inside itself,
  which amounts to saying that it has a Boolean complement.
\end{defn}

Before we proceed to defining zero-dimensionality, we record the following
important fact about the well inside relation:

\begin{prop}\label{prop:well-inside-upwards-downwards}
  Given opens $U, V, W : \opens{X}$ of a locale $X$,
  \begin{enumerate}
    \item if $U \WellInside V$ and $V \le W$ then $U \WellInside W$; and
    \item if $U \le V$ and $V \WellInside W$ then $U \WellInside W$.
  \end{enumerate}
\end{prop}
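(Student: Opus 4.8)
The plan is to prove both statements directly by unfolding Definition~\ref{defn:well-inside} and, in each case, producing the witness $W'$ required to establish the conclusion from the witness $W$ provided by the hypothesis. Since $U \WellInside V$ is a proposition (being a propositional truncation) and the goal $U \WellInside W$ is likewise a proposition, we are entitled to eliminate the truncation in the hypothesis and work with an actual witness, rather than an anonymous inhabitant.

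For item (1), suppose $U \WellInside V$ and $V \le W$. Unfolding the hypothesis, there is some $W_0 : \opens{X}$ with $U \wedge W_0 = \bot$ and $V \vee W_0 = \top$. I claim the \emph{same} witness $W_0$ works for $U \WellInside W$: the condition $U \wedge W_0 = \bot$ is unchanged, and for the second condition we need $W \vee W_0 = \top$. From $V \le W$ we get $V \vee W_0 \le W \vee W_0$, hence $\top = V \vee W_0 \le W \vee W_0 \le \top$, so $W \vee W_0 = \top$ by antisymmetry. This gives $\bigexists_{W' : \opens{X}} (U \wedge W' = \bot) \times (W \vee W' = \top)$, as required.

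For item (2), suppose $U \le V$ and $V \WellInside W$. Unfolding, there is some $W_0 : \opens{X}$ with $V \wedge W_0 = \bot$ and $W \vee W_0 = \top$. Again I claim $W_0$ itself witnesses $U \WellInside W$: the condition $W \vee W_0 = \top$ is unchanged, and from $U \le V$ we get $U \wedge W_0 \le V \wedge W_0 = \bot$, so $U \wedge W_0 = \bot$ since $\bot$ is the least element. Hence $\bigexists_{W' : \opens{X}} (U \wedge W' = \bot) \times (W \vee W' = \top)$, completing the argument.

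There is no real obstacle here; the only mild subtlety is bookkeeping around the propositional truncation, namely that we may extract a genuine witness $W_0$ from the hypothesis precisely because the goal is itself a proposition, and then re-wrap the same $W_0$ with $\abs{\blank}$ to conclude. The monotonicity facts used ($\blank \vee \blank$ and $\blank \wedge \blank$ are monotone, $\bot$ is least, $\top$ is greatest) are all immediate consequences of the frame axioms. Both parts reuse the witness supplied by the hypothesis verbatim, which is what makes the proof short.
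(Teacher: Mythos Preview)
Your proof is correct and is exactly the standard argument for this fact. The paper does not give a proof of this proposition at all (it is stated without proof, as a well-known property of the well-inside relation), so there is nothing to compare against; your handling of the truncation elimination and the reuse of the same witness $W_0$ in each part is precisely what one would expect.
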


Our definition of zero-dimensionality is analogous to the definition of a
spectral locale where conditions of interest apply only to basic opens.

\begin{defn}[Zero-dimensional frame]\label{defn:zero-dimensional}
  A locale is called \define{zero-dimensional} if it has a small
  basis $\{ B_i \}_{i : I}$ with each $B_i$ clopen.
\end{defn}

Zero-dimensionality can in fact be viewed as a special case of
\emph{regularity}. For purposes of our development, we need the result that $U
\WayBelow V$ implies $U \WellInside V$ in any zero-dimensional
locale~\cite[Lemma VII.3.5, pg.~303]{stone-spaces}. As this can be strengthened
to apply to the more general case of regular locales, we now define the notion
of regularity, using which we obtain a result slightly more general than needed.

\begin{defn}[Regular locale]\label{defn:regular}
  A locale is called \define{regular} if it has some basis $\{ B_i
  \}_{i : I}$ such that for any open $U$, every $B_j$ in the
  covering family for $U$ is well inside $U$.
\end{defn}

Similar to the case of spectral locales, the basis of a regular locale can be
assumed to be closed under finite joins without loss of generality as
every basis can be closed under finite joins to obtain another basis satisfying
the regularity condition of Definition~\ref{defn:regular}.

\begin{prop}\label{prop:zero-dimensional-implies-regular}
  Every zero-dimensional locale is regular.
\end{prop}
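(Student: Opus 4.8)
The plan is to show that a zero-dimensional locale satisfies the defining condition of regularity, using essentially the same basis. Let $X$ be zero-dimensional, so by Definition~\ref{defn:zero-dimensional} it has a small basis $\{ B_i \}_{i : I}$ with each $B_i$ clopen. I claim this same family witnesses regularity. Since regularity is a property and the existence of the clopen basis is a property, we may work inside the propositional truncation and assume we are literally given such a basis $\{ B_i \}_{i : I}$ together with, for each $i : I$, the data that $B_i$ is well inside itself, i.e.\ $B_i \WellInside B_i$ (Definition~\ref{defn:clopen}).

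Next I would unpack what needs to be proved: for an arbitrary open $U : \opens{X}$, take its basic covering family $\{ B_j \mid j \in J \}$, so $U = \bigvee \{ B_j \mid j \in J \}$, and I must show that every $B_j$ with $j \in J$ satisfies $B_j \WellInside U$. Fix such a $j$. We know $B_j \WellInside B_j$ because $B_j$ is clopen. We also know $B_j \le U$, since $B_j$ is one of the joinands of $U = \bigvee \{ B_j \mid j \in J \}$. The conclusion $B_j \WellInside U$ then follows immediately from part~(1) of Proposition~\ref{prop:well-inside-upwards-downwards}: from $B_j \WellInside B_j$ and $B_j \le U$ we get $B_j \WellInside U$.

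I do not anticipate a genuine obstacle here; this is a short argument that just chains together the definitions with the monotonicity lemma already recorded. The only point requiring a little care is the bookkeeping around truncation: both ``$X$ is zero-dimensional'' and ``$X$ is regular'' assert the mere existence of a suitable basis, and the basic covering family of $U$ is likewise obtained from a propositional truncation, so one must observe that the goal ``every $B_j$ in the covering family is well inside $U$'' is a proposition (being a product of propositions, since $\WellInside$ is valued in $\Omega$ via propositional truncation) and hence these truncations may be eliminated. Once that is noted, the proof is essentially the two-line application of Proposition~\ref{prop:well-inside-upwards-downwards}(1) described above. If desired, one could remark that this also shows the clopen basis of a zero-dimensional locale is already a basis witnessing regularity without any further closure under finite joins.
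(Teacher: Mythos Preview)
Your proposal is correct and follows essentially the same argument as the paper: use that each basic $B_j$ is clopen, hence $B_j \WellInside B_j$, combine with $B_j \le U$, and apply Proposition~\ref{prop:well-inside-upwards-downwards}(1). Your additional remarks on truncation bookkeeping are a welcome clarification but do not change the approach.
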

\begin{proof}
  Let $X$ be a zero-dimensional locale and call its basis $\{ B_i \}_{i : I}$.
  Consider some $U : \opens{X}$. There must be a basic covering $U = \bigvee_{i
    \in J} B_j$ such that each $B_j$ is clopen for every $j \in J$. Clearly,
  $B_j \le U$ so we have $B_j \WellInside B_j \le U$ which implies $B_j
  \WellInside U$ (by Proposition~\ref{prop:well-inside-upwards-downwards}.i).
\end{proof}

The following two propositions are needed to prove that compact opens and
clopens coincide in Stone locales, which we will need later. They are
adaptations of standard proofs \cite[pg.~303, Lemma VII.3.5]{stone-spaces} into
our predicative setting.

\begin{prop}\label{prop:way-below-implies-well-inside}
  In any regular locale, $U \WayBelow V$ implies $U \WellInside V$ for any
  two opens $U, V$.
\end{prop}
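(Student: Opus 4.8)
The plan is to use the regular basis to decompose $V$ as a join of basic opens, each well inside $V$, and then exploit the directedness of (the finite-join closure of) this covering family together with the hypothesis $U \WayBelow V$ to pin $U$ underneath a single basic open that is already known to be well inside $V$; Proposition \ref{prop:well-inside-upwards-downwards}.ii then finishes the argument. Concretely, let $\{ B_i \}_{i : I}$ be a basis witnessing regularity of $X$, and assume without loss of generality (as remarked after Definition \ref{defn:regular}) that it is closed under finite joins. Let $\{ B_j \mid j \in J \}$ be the basic covering family for $V$, so $V = \bigvee \{ B_j \mid j \in J \}$ and each $B_j$ (for $j \in J$) is well inside $V$ by the regularity condition. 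Since the basis is closed under finite joins, this covering family is directed.

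Now apply the hypothesis $U \WayBelow V$ to this directed family: since $V \le \bigvee \{ B_j \mid j \in J \}$, unfolding Definition \ref{defn:way-below} yields that there exists $j \in J$ with $U \le B_j$. For this particular $j$ we have $B_j \WellInside V$, and since $U \le B_j$, Proposition \ref{prop:well-inside-upwards-downwards}.ii gives $U \WellInside V$, as required.

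One point of care: the conclusion $U \WellInside V$ is a proposition (it is a propositional truncation), so we are entitled to eliminate out of the various existential hypotheses — the anonymous basis witnessing regularity, the truncated existence of the covering family $J$, and the truncated existential produced by way-belowness. This is exactly the situation where "existence is property" causes no difficulty, since our goal is itself a proposition. The only genuine subtlety, and the step I would be most careful about, is verifying that the finite-join closure of a regularity basis is again a regularity basis — i.e.\ that if the basic covering opens of $V$ are each well inside $V$, then so are their finite joins — but this follows because $\WellInside$ is compatible with finite joins in the appropriate sense (a finite join of opens each well inside $V$ is again well inside $V$, by combining the Boolean-style witnesses), and this closure property is precisely what the remark following Definition \ref{defn:regular} asserts we may assume.
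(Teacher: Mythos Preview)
Your proof is correct and follows essentially the same route as the paper: take the regular basis closed under finite joins, cover $V$ by basic opens each well inside $V$, use $U \WayBelow V$ against this directed cover to get $U \le B_k$ for some $k$, and conclude via Proposition~\ref{prop:well-inside-upwards-downwards}. Your extra care about eliminating truncated existentials into the propositional goal, and about why the finite-join closure of a regular basis remains regular, is welcome elaboration but not a departure from the paper's argument.
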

\begin{proof}
  Let $\{ B_i \}_{i : I}$ be the basis, closed under finite joins, of a regular
  locale $X$, let $U, V : \opens{X}$ such that $U \WayBelow V$, and let $\{
  B_j \}_{j \in J}$ be the basic family covering $V$. As $V \le \bigvee_{j \in
    J} B_j$ there must exist some $k \in J$ such that $U \le B_k$ by the fact
  that $U \WayBelow V$. We then have $U \le B_k \WellInside V$ which implies $U
  \WellInside V$ by Proposition~\ref{prop:well-inside-upwards-downwards}.
\end{proof}

\begin{prop}\label{prop:well-inside-implies-way-below}
  In any compact locale, $U \WellInside V$ implies $U \WayBelow V$ for any
  two opens $U, V$.
\end{prop}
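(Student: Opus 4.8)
The plan is to reduce the statement to the compactness hypothesis $\top \WayBelow \top$ by the standard trick of ``adding the complementary open to a covering family''. Since $U \WayBelow V$ is a proposition, I can first eliminate the propositional truncation in $U \WellInside V$ and assume given an open $W : \opens{X}$ with $U \wedge W = \bot$ and $V \vee W = \top$.

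To prove $U \WayBelow V$, I take an arbitrary directed family $\{ A_i \}_{i : I}$ of opens with $V \le \bigvee_{i : I} A_i$ and must find $i : I$ with $U \le A_i$. The key step is to pass to the family $\{ A_i \vee W \}_{i : I}$: it is still directed, since $I$ remains inhabited and any upper bound $A_k$ of $\{ A_i, A_j \}$ yields the upper bound $A_k \vee W$ of $\{ A_i \vee W, A_j \vee W \}$; and its join dominates $\top$, because $\top = V \vee W \le (\bigvee_{i : I} A_i) \vee W \le \bigvee_{i : I}(A_i \vee W)$. Hence, by compactness of $X$ applied to this directed family, there is some $i : I$ with $\top \le A_i \vee W$.

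For that index $i$, I then compute, using distributivity of binary meets over (binary) joins, $U = U \wedge \top = U \wedge (A_i \vee W) = (U \wedge A_i) \vee (U \wedge W) = (U \wedge A_i) \vee \bot = U \wedge A_i$, so $U \le A_i$, which is exactly what was required.

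I expect no real obstacle here: the argument is almost entirely routine frame algebra. The two points that need a moment's care are (i) that eliminating the truncation in the hypothesis is legitimate precisely because the goal $U \WayBelow V$ is a proposition, and (ii) that joining with the fixed open $W$ preserves directedness — this is what licenses the appeal to $\top \WayBelow \top$, and it is the only place where the structure of a directed (as opposed to an arbitrary) family is actually used.
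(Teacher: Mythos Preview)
Your argument is correct and is exactly the standard proof the paper defers to (Johnstone, \emph{Stone Spaces}, Lemma~VII.3.5): witness $W$ from $U \WellInside V$, pass to the directed family $\{A_i \vee W\}_{i:I}$ covering $\top$, invoke compactness, and recover $U \le A_i$ via $U \wedge W = \bot$. Your remarks on truncation-elimination and on preservation of directedness are the right hygiene for the predicative setting; nothing further is needed.
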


The proof of Proposition~\ref{prop:well-inside-implies-way-below} is omitted as
it is exactly the same as in \cite[pg.~303]{stone-spaces}.

\begin{defn}[Stone locale]\label{defn:stone}
  A \define{Stone locale} is one that is compact and zero-dimensional.
\end{defn}

\begin{prop}
  In any Stone locale, an open is compact iff it is clopen.
\end{prop}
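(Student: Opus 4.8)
The plan is to prove the two implications separately, drawing on the propositions already established for regular and compact locales. Since a Stone locale is by definition both compact and zero-dimensional (Definition~\ref{defn:stone}), and every zero-dimensional locale is regular (Proposition~\ref{prop:zero-dimensional-implies-regular}), a Stone locale is in particular both compact and regular, so both directions of the equivalence become available.

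First I would handle the direction ``clopen implies compact''. Suppose $U : \opens{X}$ is clopen, which by Definition~\ref{defn:clopen} means $U \WellInside U$. Since $X$ is compact, Proposition~\ref{prop:well-inside-implies-way-below} applied with $V \is U$ gives $U \WayBelow U$, which is precisely the statement that $U$ is compact (Definition~\ref{defn:compact-open}). This direction is immediate once the right proposition is invoked.

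For the converse, ``compact implies clopen'', suppose $U : \opens{X}$ is compact, i.e.\ $U \WayBelow U$. Since $X$ is zero-dimensional it is regular, so Proposition~\ref{prop:way-below-implies-well-inside} applied with $V \is U$ yields $U \WellInside U$, which by Definition~\ref{defn:clopen} says exactly that $U$ is clopen. Again this follows directly from the machinery already in place.

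I expect there to be essentially no obstacle here: the real work was done in Propositions~\ref{prop:way-below-implies-well-inside} and~\ref{prop:well-inside-implies-way-below}, together with the observation that Stone locales are both compact and regular. The only point requiring a little care is making explicit that zero-dimensionality entails regularity so that Proposition~\ref{prop:way-below-implies-well-inside} is applicable; once that is noted, both implications are one-line instantiations of the preceding results.
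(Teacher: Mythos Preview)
Your proposal is correct and follows essentially the same approach as the paper: the paper's proof simply cites Propositions~\ref{prop:way-below-implies-well-inside} and~\ref{prop:well-inside-implies-way-below} together with Proposition~\ref{prop:zero-dimensional-implies-regular}, which is exactly the unpacking you give. Your version is just a more explicit spelling-out of the two directions.
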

\begin{proof}
  By propositions \ref{prop:way-below-implies-well-inside} and
  \ref{prop:well-inside-implies-way-below} and the fact that every
  zero-dimensional locale is regular
  (Proposition~\ref{prop:zero-dimensional-implies-regular}).
\end{proof}

\section{Adjoint Functor Theorem for frames with small bases}\label{sec:aft}

We start with the definition of the notion of an adjunction in the simplified
context of posetal categories.

\begin{defn}
  Let $P$ and $Q$ be two posets. An adjunction between $P$ and $Q$ consists of a
  pair of monotonic maps $f : P \rightarrow Q$ and $g : Q \rightarrow P$
  satisfying $f \dashv g \is \prod_{x : P} \prod_{y : Q} f(x) \le y
  \leftrightarrow x \le g(y)$.
\end{defn}

In locale theory, it is standard convention to denote by $f_* : \opens{X} \rightarrow
\opens{Y}$ the right adjoint of a frame homomorphism $f^* : \opens{Y} \rightarrow
\opens{X}$ corresponding to a continuous map of locales $f : X \rightarrow Y$. The right
adjoint of a frame homomorphism is defined using the Adjoint Functor Theorem
which amounts to the definition: $f_* \is U \mapsto \bigvee \{ V : \opens{Y} \mid f^*(V) \le U
\}$. In the predicative setting of type theory however, it is not clear how the
right adjoint of a frame homomorphism would be defined as the family $\{ V :
\opens{Y} \mid f^*(V) \le U \}$ might be too big in general, meaning it is not clear
\emph{a priori} that its join in $\opens{X}$ exists. To resolve this problem, we
restrict attention once again to frames with small bases in which we circumvent
this problem by quantifying over only the basic elements.

\begin{thm}[AFT]\label{thm:aft}
  Let $X$ and $Y$ be two large, locally small, and small complete locales and
  let $f^* : \opens{Y} \to \opens{X}$ be a monotone map. Assume that $Y$ has a
  small basis $\{ B_i \}_{i : I}$. The map $f^*$ has a right adjoint iff
  $f^*(\bigvee_i U_i) = \bigvee_i f^*(U_i)$ for any small family $\{ U_i \}_{i :
  I}$ in $\opens{Y}$.
\end{thm}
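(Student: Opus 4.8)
The plan is to prove the two directions of the biconditional separately, with the forward direction being essentially formal and the backward direction being where the small basis does the real work.

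For the forward direction, assume $f^*$ has a right adjoint $f_*$. Then $f^*$ is a left adjoint between posets, and left adjoints preserve all joins that exist: given a small family $\{U_i\}_{i:I}$ with join $\bigvee_i U_i$ in $\opens{Y}$, I would show $f^*(\bigvee_i U_i)$ satisfies the universal property of $\bigvee_i f^*(U_i)$ in $\opens{X}$ by the standard chain of adjunction equivalences. Concretely, for any $W : \opens{X}$, we have $f^*(\bigvee_i U_i) \le W$ iff $\bigvee_i U_i \le f_*(W)$ iff $U_i \le f_*(W)$ for all $i$ iff $f^*(U_i) \le W$ for all $i$; since $\opens{X}$ is a poset this pins down $f^*(\bigvee_i U_i)$ as the least upper bound of the $f^*(U_i)$. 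Monotonicity of $f^*$ is assumed, so no subtlety there.

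For the backward direction, assume $f^*$ preserves joins of small ($I$-indexed, and more generally $\WW$-indexed via the basis) families. I would define the candidate right adjoint by the formula restricted to basic opens:
\begin{equation*}
  f_*(W) \is \bigvee \left\{ B_i \mid i : I,\ f^*(B_i) \le W \right\},
\end{equation*}
which is a join of a $\WW$-family and hence exists in the small complete frame $\opens{Y}$. Here local smallness of $\opens{X}$ is what makes ``$f^*(B_i) \le W$'' a proposition in a small universe, so that the displayed subfamily of $\{B_i\}_{i:I}$ is again a genuine $\WW$-family; this is the point where the hypotheses are used in an essential way. I would then verify the adjunction condition $f^*(V) \le W \leftrightarrow V \le f_*(W)$. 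The direction ($\Leftarrow$) is easy: if $V \le f_*(W)$ then applying monotone $f^*$ and join-preservation gives $f^*(V) \le \bigvee\{ f^*(B_i) \mid f^*(B_i) \le W \} \le W$. The direction ($\Rightarrow$) is where the basis is used again: writing $V = \bigvee\{ B_j \mid j \in J\}$ as its basic covering family, from $f^*(V) \le W$ and join-preservation we get $f^*(B_j) \le W$ for every $j \in J$, hence each such $B_j$ appears in the family defining $f_*(W)$, so $B_j \le f_*(W)$ for all $j \in J$ and therefore $V = \bigvee_{j \in J} B_j \le f_*(W)$. Finally I would note $f_*$ is automatically monotone, being defined as a join over a condition that only weakens as $W$ grows, so it is a bona fide right adjoint in the posetal sense.

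The main obstacle I anticipate is not any single inequality but the bookkeeping around universe levels: one must check that the family $\{ B_i \mid i:I,\ f^*(B_i) \le W \}$ really is a $\WW$-family (using that $X$ is locally small so the order takes values in $\Omega_\WW$), and hence that its join is legitimately formed in the $\WW$-complete frame $\opens{Y}$. One subtlety worth flagging is that the covering family witnessing $V = \bigvee\{B_j \mid j \in J\}$ exists only as propositional data, so the ($\Rightarrow$) argument must be carried out with the goal $V \le f_*(W)$, which is a proposition, allowing us to unsquash the basis witness. Everything else is a routine application of the adjunction equivalences and the distributivity/completeness already available.
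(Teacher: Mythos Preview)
Your proof is correct and follows essentially the same approach as the paper: the same forward-direction adjunction argument and the same backward-direction definition $f_*(W) \is \bigvee\{B_i \mid f^*(B_i) \le W\}$ with the same verification of the adjunction inequalities via the basic covering of $V$. Your additional remarks about local smallness ensuring the defining family is a genuine $\WW$-family, and about unsquashing the truncated basis witness against the propositional goal $V \le f_*(W)$, make explicit predicative bookkeeping that the paper leaves implicit.
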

\begin{proof}
  Let $f^* : \opens{Y} \rightarrow \opens{X}$ be a monotone map from frame
  $\opens{Y}$ to frame $\opens{Y}$ and assume that $Y$ has a small basis $\{ B_i
  \}_{i : I}$.

  The forward direction is easy: suppose $f^* : \opens{Y} \rightarrow \opens{X}$ has a
  right adjoint $f_* : \opens{X} \rightarrow \opens{Y}$. Let $\{ U_i \}_{i : I}$ be a
  family in $\opens{Y}$. By the uniqueness of joins, it is sufficient to show
  that $f^*(\bigvee_i U_i)$ is the join of $\{ f^*(U_i) \}_{i : I}$. It is clearly an
  upper bound by the fact that $f^*$ is monotone. Given any other upper bound
  $V$ of $\{ f^*(U_i) \}_{i : I}$, we have that $f^*(\bigvee_i U_i) \le V$ since $f^*(\bigvee_i
  U_i) \le V \leftrightarrow \left(\bigvee_i U_i\right) \le f_*(V)$ meaning it is sufficient to show
  $U_i \le f_*(V)$ for each $U_i$. Since $U_i \le f_*(V)$ iff $f^*(U_i) \le V$, we are
  done as the latter can be seen to hold directly from the fact that $V$ is an
  upper bound of $\{ f^*(U_i) \}_{i : I}$.

  For the converse, suppose $f^*(\bigvee_i U_i) = \bigvee_{i : I} f^*(U_i)$ for every family
  $\{ U_i \}_{i : I}$. We define the right adjoint of $f^*$ as:
  \begin{equation*}
    f_*(V) \quad\is\quad \bigvee \left\{ B_i \mid i : I, f^*(B_i) \le V \right\}.
  \end{equation*}
  We need to show that $f_*$ is the right adjoint of $f^*$ i.e.\ that $f^*(U) \le
  V \leftrightarrow U \le f_*(V)$ for any two $U, V : \opens{X}$. For the forward direction,
  assume $f^*(U) \le V$. We know that there exists a covering family $\{ B_j \}_{j
    \in J}$ for $U$ with $U = \bigvee_{j \in J} B_j$ so it suffices to show that $B_j \le
  f_*(V)$ for every $j \in J$. It remains to show that $f^*(B_j) \le V$. This
  follows from the fact that $f^*(B_j) \le f^*(\bigvee_{j \in J} B_j) \le f^*(U) \le V$. For
  the backward direction, let $U \le f_*(V)$. We have:
  \begin{align*}
    f^*(U) \quad&\le\quad f^*(f_*(V)) \\
         \quad&\equiv\quad f^*\left(\bigvee \left\{ B_i \mid f^*(B_i) \le V \right\}\right) \\
         \quad&\le\quad \bigvee \setof{f^*(B_i) \mid f^*(B_i) \le V}
              && [\text{since $f^*$ preserves joins}] \\
         \quad&\le\quad V.
  \end{align*}
\end{proof}

Our primary use case for the Adjoint Functor Theorem is the construction of
Heyting implications in locally small frames with small bases.

\begin{defn}[Heyting implication]\label{defn:heyting-implication}
  Let $X$ be a large, locally small, and small complete locale with a small
  basis and let $U : \opens{X}$. As the map $\blank \wedge U : \opens{X} \rightarrow \opens{X}$
  preserves joins by the frame distributivity law, it must have a right adjoint
  $h : \opens{X} \rightarrow \opens{X}$, by Theorem \ref{thm:aft}, that satisfies $W \wedge U \le
  V \leftrightarrow W \le h(V)$ for all $W, V : \opens{X}$. We then define the Heyting
  implication as: $U \Rightarrow V \is h(V)$.
\end{defn}

The Adjoint Functor Theorem also allows us to define the notion of a perfect
frame homomorphism.

\begin{defn}[Perfect frame homomorphism]\label{defn:perfect-map}
  Let $X$ and $Y$ be two large, locally small, and small complete locales and
  assume that $Y$ has a small basis. A continuous map $f : X \rightarrow Y$ is said to be
  \define{perfect} if the right adjoint $f_*$ of its defining frame homomorphism
  $f^*$ is Scott continuous.
\end{defn}

\begin{prop}\label{prop:perfect-resp-way-below}
  Let $f : X \rightarrow Y$ be a perfect map where $Y$ is a locale with small basis. The
  frame homomorphism $f^*$ respects the \emph{way below} relation, that is, $U \WayBelow V$
  implies $f^*(U) \WayBelow f^*(V)$, for any two $U, V : \opens{Y}$.
\end{prop}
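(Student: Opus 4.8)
The plan is to unfold the definition of $U \WayBelow V$ and transport a directed cover of $f^*(V)$ back along the right adjoint $f_*$, using that $f_*$ is Scott continuous (perfectness) together with the adjunction inequalities $f^* f_* \le \mathrm{id}$ and $\mathrm{id} \le f_* f^*$. So suppose $U \WayBelow V$ in $\opens{Y}$, and let $\{ W_i \}_{i : I}$ be a directed family in $\opens{X}$ with $f^*(V) \le \bigvee_{i : I} W_i$. I must produce some $i : I$ with $f^*(U) \le W_i$.

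First I would apply $f_*$ to the inequality $f^*(V) \le \bigvee_i W_i$ and use monotonicity of $f_*$ to get $f_*(f^*(V)) \le f_*\!\left(\bigvee_i W_i\right)$; since $\mathrm{id} \le f_* f^*$ (the unit of the adjunction), this gives $V \le f_*\!\left(\bigvee_i W_i\right)$. Now the key step: because $f$ is perfect, $f_*$ is Scott continuous, so $f_*\!\left(\bigvee_i W_i\right) = \bigvee_i f_*(W_i)$, and moreover $\{ f_*(W_i) \}_{i : I}$ is again directed since $f_*$ is monotone (directedness is preserved by monotone maps by Definition~\ref{defn:directed}: a common upper bound $W_k$ of $W_i, W_j$ maps to a common upper bound $f_*(W_k)$ of $f_*(W_i), f_*(W_j)$, and $I$ remains inhabited). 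Hence $V \le \bigvee_i f_*(W_i)$ with $\{ f_*(W_i) \}_{i : I}$ directed, so $U \WayBelow V$ yields some $i : I$ with $U \le f_*(W_i)$.

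Finally, I would apply $f^*$: monotonicity gives $f^*(U) \le f^*(f_*(W_i))$, and the counit $f^* f_* \le \mathrm{id}$ gives $f^*(f_*(W_i)) \le W_i$, so $f^*(U) \le W_i$, which is exactly what was needed. Since the conclusion $\bigexists_{i : I} f^*(U) \le W_i$ is a proposition, extracting the witness $i$ from the (truncated) existential supplied by $U \WayBelow V$ is unproblematic.

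The only subtle point — and the step I would be most careful about — is the invocation of Scott continuity of $f_*$: it requires the family $\{ W_i \}_{i : I}$ to be directed (which it is, by hypothesis), and it requires that $\{ f_*(W_i) \}_{i : I}$ be recognised as directed before appealing to $U \WayBelow V$, so the argument genuinely uses both that $f_*$ preserves directed joins and that it is monotone. Everything else is a routine chase through the unit and counit of the adjunction $f^* \dashv f_*$.
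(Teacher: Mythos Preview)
Your proof is correct. The argument via the adjunction $f^* \dashv f_*$ --- transporting a directed cover of $f^*(V)$ through $f_*$, invoking Scott continuity of $f_*$ to commute it with the directed join, applying $U \WayBelow V$ to the resulting directed family $\{f_*(W_i)\}_{i:I}$, and returning via the adjunction --- is clean and complete. The universe bookkeeping works (the family $\{f_*(W_i)\}_{i:I}$ remains a small family in $\opens{Y}$, so the hypothesis $U \WayBelow V$ applies to it), and your remark about truncation is apt.

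The paper itself omits the proof, deferring to \cite{patch-short} with the comment that the argument there carries over ``once it is ensured that the Heyting implication exists through the small basis assumption.'' So the intended proof evidently passes through Heyting implications (Definition~\ref{defn:heyting-implication}), which in the predicative setting require the small-basis hypothesis via Theorem~\ref{thm:aft}. Your route sidesteps Heyting implications altogether: the small-basis assumption on $Y$ is used only implicitly, to guarantee that $f_*$ exists at all (as already required by Definition~\ref{defn:perfect-map}), and the remainder is a direct chase through the unit, counit, and Scott continuity. This is more elementary and isolates exactly what perfectness contributes.
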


A proof of Proposition~\ref{prop:perfect-resp-way-below} can be found in
\cite{patch-short}. Our proof is mostly the same, once it is ensured that the
Heyting implication exists through the small basis assumption. We thus omit the
proof.

\begin{cor}\label{cor:perfect-maps-are-spectral}
  Perfect maps are spectral as they preserve compact opens.
\end{cor}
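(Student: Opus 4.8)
The plan is to chase the definitions. Let $f : X \rightarrow Y$ be a perfect map between the relevant locales. I want to show that for a compact open $C : \opens{Y}$, the open $f^*(C)$ is compact in $\opens{X}$. By Definition~\ref{defn:compact-open}, $C$ compact means $C \WayBelow C$, and $f^*(C)$ compact means $f^*(C) \WayBelow f^*(C)$. But Proposition~\ref{prop:perfect-resp-way-below} tells us precisely that a perfect map's frame homomorphism respects the way below relation: $U \WayBelow V$ implies $f^*(U) \WayBelow f^*(V)$ for any $U, V : \opens{Y}$. Instantiating this at $U \is C$ and $V \is C$, we get $C \WayBelow C$ implies $f^*(C) \WayBelow f^*(C)$, which is exactly the statement that $f^*(C)$ is compact.

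Having established that $f^*$ preserves compact opens, I would then appeal to Definition~\ref{defn:spectral-map}: a continuous map between spectral locales is spectral exactly when its defining frame homomorphism sends compact opens to compact opens. Since perfect maps are assumed to be between locales of the appropriate form (large, locally small, small complete, with small basis on the codomain) and here we are in the context where both are spectral, the preservation of compactness just proved is the defining condition, so $f$ is spectral.

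The only mild subtlety — and the step I would flag as requiring care rather than as a genuine obstacle — is making sure the ambient hypotheses line up: Proposition~\ref{prop:perfect-resp-way-below} is stated for $Y$ a locale with small basis, and Definition~\ref{defn:perfect-map} already bakes in that $X$ and $Y$ are large, locally small, small complete and that $Y$ has a small basis, so no extra assumption is needed. For the conclusion to be about spectral maps in the sense of Definition~\ref{defn:spectral-map}, one should be in the situation where $X$ and $Y$ are themselves spectral; this is the context in which the corollary is invoked (the inclusion $\Stone \hookrightarrow \Spec$ and the patch adjunction), so there is nothing further to check. In short, the proof is a one-line consequence: perfect $\Rightarrow$ respects $\WayBelow$ (Proposition~\ref{prop:perfect-resp-way-below}) $\Rightarrow$ preserves compact opens $\Rightarrow$ spectral (Definition~\ref{defn:spectral-map}).
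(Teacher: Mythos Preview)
Your proposal is correct and matches the paper's approach exactly: the corollary is stated without proof in the paper precisely because it is the one-line deduction you give, namely that perfect maps respect $\WayBelow$ (Proposition~\ref{prop:perfect-resp-way-below}), hence send self-way-below opens to self-way-below opens, hence preserve compactness.
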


In fact, the converse is also true in the case of spectral locales so
Corollary~\ref{cor:perfect-maps-are-spectral} can be strengthened to an
equivalence in this case.

\begin{prop}\label{lem:perfect-iff-spectral}
  Let $X$ and $Y$ be two large, locally small, and small complete spectral
  locales and assume that $Y$ has a small basis. A continuous map $f : X \to Y$ is
  perfect iff it is spectral.
\end{prop}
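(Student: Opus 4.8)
The plan is to prove both implications separately. The backward direction — spectral implies perfect — is the substantial one, and the forward direction is exactly Corollary~\ref{cor:perfect-maps-are-spectral}, so only the converse needs work. Suppose $f : X \to Y$ is spectral, so $f^*$ preserves compact opens. I want to show that the right adjoint $f_*$ of $f^*$ (which exists by Theorem~\ref{thm:aft}, since $f^*$ is a frame homomorphism and $Y$ has a small basis) is Scott continuous, i.e.\ preserves directed joins.

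First I would fix a directed family $\{ U_i \}_{i : I}$ in $\opens{X}$ and aim to show $f_*(\bigvee_i U_i) \le \bigvee_i f_*(U_i)$, the reverse inequality being immediate from monotonicity of $f_*$. The key idea is to exploit the description $f_*(V) = \bigvee \{ B_k \mid k : K,\ f^*(B_k) \le V \}$ coming from the AFT construction, where $\{ B_k \}_{k : K}$ is the small basis of $Y$, which we may assume is closed under finite meets and hence consists of compact opens (by spectrality of $Y$ and Proposition~\ref{prop:cmp-bsc}). So $f_*(\bigvee_i U_i)$ is a join of basic opens $B_k$ with $f^*(B_k) \le \bigvee_i U_i$; it suffices to show each such $B_k$ lies below $\bigvee_i f_*(U_i)$. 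Now $f^*(B_k)$ is compact — this is precisely where spectrality of $f$ is used — and it sits below the directed join $\bigvee_i U_i$, so by compactness there exists $i : I$ with $f^*(B_k) \le U_i$. But then $B_k$ is one of the basic opens in the defining family for $f_*(U_i)$, hence $B_k \le f_*(U_i) \le \bigvee_i f_*(U_i)$, as required. Since the goal $B_k \le \bigvee_i f_*(U_i)$ is a proposition, the truncated existential from compactness is harmless.

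I also need to check that $\{ U_i \}_{i : I}$ being directed and nonempty is enough to run the compactness argument; it is, by the definition of way~below (Definition~\ref{defn:way-below}), which is stated precisely for directed families. One should also confirm $f_*$ is monotone (immediate for a right adjoint) so that it genuinely preserves the join once both inequalities are in hand, and that $f_*(\bigvee_i U_i)$ is an upper bound of $\{ f_*(U_i) \}_i$ giving the easy inequality.

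The main obstacle is bookkeeping around the small basis rather than anything deep: one must ensure the basis used in the AFT formula for $f_*$ can be taken to be the spectral basis of $Y$ (closed under finite meets, all elements compact), so that $f^*(B_k)$ is compact and the relative-compactness argument against the directed join $\bigvee_i U_i$ goes through. Because the existence of such a basis and the value of $f_*$ are both properties (propositionally truncated), and the final goal is a proposition, we can unpack all these existentials freely; no genuinely new predicative difficulty arises beyond what Theorem~\ref{thm:aft} and Proposition~\ref{prop:cmp-bsc} already handle.
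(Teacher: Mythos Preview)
Your proposal is correct and follows essentially the same route as the paper. The only cosmetic difference is that the paper phrases the key step via the adjunction (take a compact open $C \le f_*(\bigvee_i U_i)$ and transpose to $f^*(C) \le \bigvee_i U_i$), whereas you unpack the explicit AFT formula for $f_*$; since the basic opens of the spectral locale $Y$ coincide with its compact opens and the adjunction makes $f^*(B_k) \le V$ equivalent to $B_k \le f_*(V)$, the two presentations are interchangeable.
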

\begin{proof}
  The forward direction is given by
  Corollary~\ref{cor:perfect-maps-are-spectral}. For the backward direction,
  assume $f : X \to Y$ to be a spectral map. We have to show that the right
  adjoint $f_* : \opens{X} \to \opens{Y}$ of its defining frame homomorphism is
  Scott continuous. Let $\{ U_i \}_{i : I}$ be a directed family in $\opens{X}$.
  We have to show $f_*(\bigvee_{i : I} U_i) = \bigvee_{i : I} f_*(U_i)$. The
  $\bigvee_{i : I} f_*(U_i) \le f_*(\bigvee_{i : I} U_i)$ direction is immediate. For the
  $f_*(\bigvee_{i : I} U_i) \le \bigvee_{i : I} f_*(U_i)$ direction, let $C$ be a compact open
  such that $C \le f_*(\bigvee_{i : I} U_i)$. By the fact that $f^* \dashv f_*$, it must be
  the case that $f^*(C) \le \bigvee_{i : I} U_i$ and since $f^*(C)$ is compact, by the
  spectrality assumption of $f^*$, there must exist some $l : I$ such that
  $f^*(C) \le U_l$. Again by adjointness, $C \le f_*(U_l)$ so clearly
  $C \le \bigvee_{i : I}  f_*(U_i)$.
\end{proof}

\section{Meet-semilattice of Scott continuous nuclei}\label{sec:meet-semilattice}

In this section, we take the first step towards constructing the defining frame
of the patch locale on a spectral locale i.e.\ the frame of Scott continuous
nuclei. We construct the meet-semilattice of \emph{all} nuclei on a frame.

\begin{prop}\label{defn:nuclei-semilattice}
  The type of nuclei on a given frame $\opens{X}$ forms a meet-semilattice under
  the pointwise order.
\end{prop}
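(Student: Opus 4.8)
The plan is to verify the poset axioms together with the existence of all finite meets (including the empty meet, i.e.\ a top element) directly, working with the pointwise order $j \le k \is \prod_{U : \opens{X}} j(U) \le k(U)$.

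First I would record that being a nucleus is a proposition: inflationarity is a product of propositions, since $\blank \le \blank$ takes values in $\Omega_\VV$, while idempotency and binary-meet preservation are products of identity types in the set $\opens{X}$, hence also propositional. Consequently the type of nuclei is a subtype of $\opens{X} \to \opens{X}$, so it is a set, and two nuclei are equal precisely when their underlying functions are. Reflexivity and transitivity of the pointwise order are inherited pointwise from $\opens{X}$; antisymmetry follows because $j \le k$ and $k \le j$ give $j(U) = k(U)$ for every $U$ by antisymmetry in $\opens{X}$, whence $j = k$ by function extensionality.

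For the top element I would take the constant function $U \mapsto \top$: it is inflationary since $U \le \top$, idempotent since it is constant, and it preserves binary meets since $\top = \top \wedge \top$; it is clearly pointwise above every nucleus. For binary meets, given nuclei $j$ and $k$ I would set $(j \wedge k)(U) \is j(U) \wedge k(U)$. This is inflationary because $U \le j(U)$ and $U \le k(U)$, and it preserves binary meets by a routine rearrangement using commutativity and associativity of $\wedge$: $(j \wedge k)(U \wedge V) = \bigl(j(U) \wedge j(V)\bigr) \wedge \bigl(k(U) \wedge k(V)\bigr) = \bigl(j(U) \wedge k(U)\bigr) \wedge \bigl(j(V) \wedge k(V)\bigr)$. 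Moreover $j \wedge k$ lies pointwise below both $j$ and $k$, and any nucleus $l$ below both satisfies $l(U) \le j(U) \wedge k(U)$, so $l \le j \wedge k$; thus $j \wedge k$ is the greatest lower bound.

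The only step that is not a mechanical check — and hence the main obstacle — is idempotency of $j \wedge k$. Writing $W \is j(U) \wedge k(U)$, it suffices to show $(j \wedge k)(W) \le W$. Recall that a binary-meet-preserving map is monotone (from $U \le V \leftrightarrow U \wedge V = U$). Since $W \le j(U)$, monotonicity and idempotency of $j$ give $j(W) \le j(j(U)) = j(U)$, and symmetrically $k(W) \le k(U)$; therefore $(j \wedge k)(W) = j(W) \wedge k(W) \le j(U) \wedge k(U) = W$, using the remark following Definition~\ref{defn:nucleus} that $j(j(U)) \le j(U)$ suffices for idempotency. This establishes that $j \wedge k$ is again a nucleus, completing the verification that the nuclei form a meet-semilattice.
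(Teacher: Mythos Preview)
Your proposal is correct and follows essentially the same approach as the paper: define the top nucleus as $U \mapsto \top$, define binary meets pointwise, and verify the nucleus laws, with idempotency being the only nontrivial step. Your idempotency argument via monotonicity (using $W \le j(U)$ hence $j(W) \le j(j(U)) = j(U)$) is a minor reformulation of the paper's calculation, which instead expands $j(j(U) \wedge k(U))$ by meet preservation and then drops the cross terms $j(k(U))$ and $k(j(U))$; the two are equivalent since monotonicity follows from meet preservation. You additionally spell out why the type of nuclei is a set and why the pointwise order is a partial order, which the paper leaves implicit.
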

\begin{proof}
  We need to show that the type $\opens{X}$ has all finite meets. The top
  nucleus is defined as $\blank \mapsto \top$ and the meet of two nuclei as
  $j \wedge k \is U \mapsto j(U) \wedge k(U)$. It is easy to see that $j \wedge k$ is the greatest
  lower bound of $j$ and $k$ so it remains to show that $j \wedge k$ satisfies the
  nucleus laws.

  The inflation property can be seen to be satisfied from the inflation
  properties of $j$ and $k$ combined with the fact that $j(U) \wedge k(U)$ is the
  greatest lower bound of $j(U)$ and $k(U)$. To see that meet preservation
  holds, let $U, V : \opens{X}$; we have:
  \begin{align*}
    (j \wedge k)(U \wedge V) &\quad\equiv\quad j (U \wedge V) \wedge k (U \wedge V)         \\
                   &\quad=\quad j(U) \wedge j(V) \wedge k(U) \wedge k(V)     \\
                   &\quad=\quad (j(U) \wedge k(U)) \wedge (j(V) \wedge k(V)) \\
                   &\quad\equiv\quad (j \wedge k)(U) \wedge (j \wedge k)(V).
  \end{align*}
  For idempotency, let $U : \opens{X}$. We have:
  \begin{align*}
    (j \wedge k)((j \wedge k)(U)) &\quad\equiv\quad j (j(U) \wedge k(U)) \wedge k(j(U) \wedge k(U))      \\
                        &\quad=\quad j(j(U)) \wedge j(k(U)) \wedge k(j(U)) \wedge k(k(U)) \\
                        &\quad\le\quad j(j(U)) \wedge k(k(U))                     \\
                        &\quad=\quad j(U) \wedge k(U)                           \\
                        &\quad\equiv\quad (j \wedge k)(U).
  \end{align*}
\end{proof}



We now show that this meet-semilattice can be \emph{refined} to only those
nuclei that are Scott continuous (i.e.\ the \emph{perfect} nuclei).

\begin{prop}\label{prop:sc-nuclei-semilattice}
  The Scott continuous nuclei on any locale form a meet-semilattice.
\end{prop}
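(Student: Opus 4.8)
The plan is to realise the Scott continuous nuclei as a sub-meet-semilattice of the meet-semilattice of all nuclei from Proposition~\ref{defn:nuclei-semilattice}. Since Scott continuity is a \emph{property} of a nucleus, the Scott continuous nuclei form a subtype of the type of nuclei, carry the restriction of the pointwise order, and automatically acquire all finite meets as soon as those meets, computed in the ambient meet-semilattice, stay within the subtype; such inherited meets then coincide with the meets of the ambient structure. Hence the whole task reduces to two closure checks: (i) the top nucleus $\blank \mapsto \top$ is Scott continuous, and (ii) if $j$ and $k$ are Scott continuous nuclei then so is their pointwise meet $j \wedge k \is U \mapsto j(U) \wedge k(U)$.

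Check (i) is immediate: for a directed family $\{ U_i \}_{i : I}$ the index $I$ is inhabited, so $\bigvee_{i : I} \top = \top$, which is exactly preservation of this join. For (ii), first recall that a nucleus is monotone, since it preserves binary meets. Now let $\{ U_i \}_{i : I}$ be directed; using Scott continuity of $j$ and of $k$, followed by frame distributivity, one computes
\begin{align*}
  (j \wedge k)\left(\bigvee_{i : I} U_i\right)
    &\quad\equiv\quad j\left(\bigvee_{i : I} U_i\right) \wedge k\left(\bigvee_{i : I} U_i\right) \\
    &\quad=\quad \left(\bigvee_{i : I} j(U_i)\right) \wedge \left(\bigvee_{l : I} k(U_l)\right) \\
    &\quad=\quad \bigvee_{i : I} \bigvee_{l : I} \left(j(U_i) \wedge k(U_l)\right).
\end{align*}
The join $\bigvee_{i : I} (j \wedge k)(U_i) \equiv \bigvee_{i : I} \left(j(U_i) \wedge k(U_i)\right)$ sits below this double join, taking $l \is i$. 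For the reverse inequality it suffices, by the universal property of joins, to bound each $j(U_i) \wedge k(U_l)$ by $\bigvee_{m : I} \left(j(U_m) \wedge k(U_m)\right)$; since this is a proposition we may discharge the propositional truncation in the directedness hypothesis to obtain $m : I$ with $U_i \le U_m$ and $U_l \le U_m$, whence $j(U_i) \wedge k(U_l) \le j(U_m) \wedge k(U_m)$ by monotonicity of $j$ and $k$. This yields $(j \wedge k)\left(\bigvee_i U_i\right) = \bigvee_i (j \wedge k)(U_i)$, so $j \wedge k$ is Scott continuous.

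The hard part is step (ii): the pointwise meet of two directed-join-preserving maps need not preserve directed joins over an arbitrary poset, and it works here only because frame distributivity rewrites the meet of the two joins as a double join, and directedness then lets us collapse that double join back to a single one along the diagonal. The one foundational subtlety worth flagging is precisely that use of the propositional truncation hidden in directedness, which is harmless because the inequality being proved lives in $\Omega$.
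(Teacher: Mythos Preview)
Your proof is correct and follows essentially the same route as the paper: inherit the meet-semilattice structure from Proposition~\ref{defn:nuclei-semilattice}, check that the top nucleus is trivially Scott continuous, and for the binary meet apply Scott continuity of $j$ and $k$, distributivity, and then collapse the resulting double join to the diagonal using directedness and monotonicity. Your additional remarks about discharging the propositional truncation and about why the argument genuinely needs frame distributivity are accurate and do not diverge from the paper's approach.
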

\begin{proof}
  Let $X$ be a locale. The construction is the same as the one from
  Proposition~\ref{defn:nuclei-semilattice}; the top element is $\blank \mapsto \top$
  which is trivially Scott continuous so it remains to show that the meet of two
  Scott continuous nuclei is Scott continuous. Consider two Scott continuous
  nuclei $j$ and $k$ on $\opens{X}$ and a directed small family $\{ U_i \}_{i :
    I}$. We then have:
  \begin{align*}
       (j \wedge k) \paren{\bigvee_{i : I} U_i}
  &\quad\equiv\quad j \paren{\bigvee_{i : I} U_i} \wedge k \paren{\bigvee_{j : I} U_j}
     && \\
  &\quad=\quad \paren{\bigvee_{i : I} j(U_i)} \wedge \paren{\bigvee_{j : I} k(U_j)}
     && [\text{Scott continuity of $j$ and $k$}]\\
  &\quad=\quad \bigvee_{(i, j) : I \times I} j(U_i) \wedge k(U_j)
     && [\text{distributivity}]\\
  &\quad=\quad \bigvee_{i : I} j(U_i) \wedge k(U_i)
     && [\text{\dag}] \\
  &\quad\equiv\quad \bigvee_{i : I} (j \wedge k)(U_i)
     && [\text{meet preservation}].
  \end{align*}
  where, for the \dag\ step, we use antisymmetry. The backwards direction is
  immediate. For the forwards direction, we need to show that $\bigvee_{(i, j) : I \times
    I} j(U_i) \wedge k(U_j) \le \bigvee_{i : I} j(U_i) \wedge k(U_i)$, for which it suffices to
  show that $\bigvee_{i : I} j(U_i) \wedge k(U_i)$ is an upper bound of $\{j(U_i) \wedge
    k(U_j)\}_{(i, j) : I \times I}$. Let $m, n : I$ be two indices. As $\{U_i\}_{i :
    I}$ is directed, there must exist some $o$ such that $U_o$ is an upper bound
  of $\{U_m, U_n\}$. Using the monotonicity of $j$ and $k$, we get $j(U_m) \wedge
  k(U_n) \le j(U_o) \wedge k(U_o) \le \bigvee_{i : I} j(U_i) \wedge k(U_i)$ as desired.
\end{proof}

\section{Joins in the frame of Scott continuous nuclei}\label{sec:joins}

The nontrivial component of the patch frame construction is the join of a family
$\{ k_i \}_{i : I}$ of perfect nuclei, as the pointwise join fails to be
idempotent in general, and not inflationary when the family in consideration is
empty.

A construction of the join, given in~\cite{properly-injective}, is
based on the idea that, if we start with a family $\{k_i\}_{i:I}$ of
nuclei, we can consider the family
\begin{equation*}
  \left\{ k_{i_0} \circ \cdots \circ k_{i_n} \right\}_{(i_0, \cdots, i_n) : \mathsf{List}(I)},
\end{equation*}
whose index type is the type of lists of indices in $I$, that will \emph{always}
be directed. We will use the following notation for lists over a type $X$:
\begin{itemize}
  \item $\emptyl$ denotes the empty list,
  \item $x \cons s$, with $x : X$ and $s : \ListTy{X}$, denotes the list with
    first element $x$ followed by the elements of $s$,
  \item $s \append t$ denotes the concatentation of lists $s$ and $t$.
\end{itemize}

To define the join operation, we will use the iterated composition function
$\ddnm$ that we define as follows:
\begin{defn}[Iterated composition of nuclei]
  Given a small family $K \is \{ k_i \}_{i : I}$ of nuclei on a given
  locale $X$, we denote by \define{$K^*$} the family $(\ListTy{I}, \ddnm)$
  where $\ddnm$ is defined as follows:
  \begin{align*}
    \dd{\emptyl}   \quad&\is\quad \mathsf{id};  \\
    \dd{i \cons s} \quad&\is\quad \dd{s} \circ k_i.
  \end{align*}
\end{defn}
By an easy proof by induction, we have the following.
\begin{prop}\label{prop:app-lemma}
  For any family $K \is \{ k_i \}_{i : I}$ of prenuclei on a locale and
  any $s, t : \mathsf{List}(I)$, we have that \(\dd{s \append t} = \dd{t} \circ \dd{s}.\)
\end{prop}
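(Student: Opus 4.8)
The plan is to proceed by structural induction on the list $s$, keeping $t$ arbitrary throughout, and unfolding the standard recursive clauses for list concatenation, namely $\emptyl \append t \equiv t$ and $(i \cons s') \append t \equiv i \cons (s' \append t)$, together with the defining equations for $\ddnm$.

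For the base case $s \equiv \emptyl$, we compute $\dd{\emptyl \append t} \equiv \dd{t}$ on the left, while on the right $\dd{t} \circ \dd{\emptyl} \equiv \dd{t} \circ \mathsf{id} \equiv \dd{t}$, so the two sides agree. For the inductive step, suppose $s \equiv i \cons s'$ and assume as induction hypothesis that $\dd{s' \append t} = \dd{t} \circ \dd{s'}$ (for the given $t$). Then
\begin{align*}
  \dd{(i \cons s') \append t}
    \quad&\equiv\quad \dd{i \cons (s' \append t)} \\
    \quad&\equiv\quad \dd{s' \append t} \circ k_i \\
    \quad&=\quad \paren{\dd{t} \circ \dd{s'}} \circ k_i \\
    \quad&\equiv\quad \dd{t} \circ \paren{\dd{s'} \circ k_i} \\
    \quad&\equiv\quad \dd{t} \circ \dd{i \cons s'},
\end{align*}
where the single non-judgemental step is the application of the induction hypothesis, and the reassociation of the composite is the strict associativity of function composition.

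There is no real obstacle here: the argument is a direct induction, and since function composition is strictly associative and $\mathsf{id}$ is a strict unit, no appeal to function extensionality is required beyond transporting along the induction hypothesis. The only point worth making explicit is the convention fixing the recursive definition of $\append$, which is what makes the clauses above hold definitionally; note that the arguments on the right-hand side are swapped relative to $\append$, which is exactly why $\ddnm$ is defined by post-composition ($\dd{i \cons s} \is \dd{s} \circ k_i$) rather than pre-composition.
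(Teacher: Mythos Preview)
Your proof is correct and matches the paper's approach: the paper simply remarks that the result follows ``by an easy proof by induction,'' and your structural induction on $s$ is exactly that argument spelled out in full.
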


\begin{prop}\label{prop:star-prenucleus}
  Given a family $K \is \{ k_i \}_{i : I}$ of nuclei on a locale,
  every $\alpha \in K^*$ is a prenucleus, that is, for every $s : \mathsf{List}(I)$,
  the function $\dd{s}$ is a prenucleus.
\end{prop}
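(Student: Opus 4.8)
The plan is to proceed by induction on the list $s : \ListTy{I}$, using Proposition~\ref{prop:app-lemma} together with the fact (noted just after Definition~\ref{defn:nucleus}) that a prenucleus is an inflationary, binary-meet-preserving endofunction on $\opens{X}$ — idempotency is \emph{not} required, which is exactly why the statement is about prenuclei rather than nuclei.

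First I would handle the base case $s \equiv \emptyl$: here $\dd{\emptyl} \equiv \mathsf{id}$, which is trivially inflationary and preserves binary meets, hence a prenucleus. For the inductive step, take $s \equiv i \cons t$ with the induction hypothesis that $\dd{t}$ is a prenucleus, so that $\dd{i \cons t} \equiv \dd{t} \circ k_i$ is a composite of two prenuclei (recall $k_i$ is a nucleus, hence in particular a prenucleus). It therefore suffices to prove the auxiliary fact that the composite of two prenuclei is again a prenucleus. For inflationarity: given $U : \opens{X}$, we have $U \le k_i(U)$ by inflationarity of $k_i$, and then $k_i(U) \le \dd{t}(k_i(U))$ by inflationarity of $\dd{t}$; transitivity gives $U \le (\dd{t} \circ k_i)(U)$. (Monotonicity of $\dd{t}$, which I would also observe follows from meet-preservation in the usual way, is not even needed here.) For binary-meet preservation: $(\dd{t} \circ k_i)(U \wedge V) \equiv \dd{t}(k_i(U \wedge V)) = \dd{t}(k_i(U) \wedge k_i(V)) = \dd{t}(k_i(U)) \wedge \dd{t}(k_i(V)) \equiv (\dd{t} \circ k_i)(U) \wedge (\dd{t} \circ k_i)(V)$, using meet-preservation of $k_i$ and then of $\dd{t}$.

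There is essentially no obstacle here: the only mild subtlety is the bookkeeping induced by the ``reversed'' definition $\dd{i \cons s} \is \dd{s} \circ k_i$, which one must keep straight so that the induction hypothesis is applied to $\dd{t}$ (the outer function) and the known nucleus $k_i$ sits on the inside. One could alternatively phrase the induction using Proposition~\ref{prop:app-lemma} and induction on the structure of $s$ via concatenation, but the direct structural induction above is cleanest. I would also remark in passing that although each $\dd{s}$ is a prenucleus, it is generally \emph{not} a nucleus, since composites of idempotent maps need not be idempotent — this is precisely the point that motivates the subsequent construction of the join of Scott continuous nuclei via the directed family $K^*$.
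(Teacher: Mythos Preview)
Your proof is correct and essentially identical to the paper's: both proceed by structural induction on the list, handle the base case $\dd{\emptyl} \equiv \mathsf{id}$ trivially, and in the inductive step verify inflationarity via $U \le k_i(U) \le \dd{t}(k_i(U))$ and meet preservation by chaining the meet-preservation of $k_i$ with that of $\dd{t}$ (the paper's $\dd{s'}$). Your mention of Proposition~\ref{prop:app-lemma} is unnecessary here (and indeed you do not actually invoke it), but otherwise the argument matches the paper's proof step for step.
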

\begin{proof}
  If $s = \emptyl$, we are done as it is immediate that the identity function
  $\mathsf{id}$ is a prenucleus. If $s = i \cons s'$, we need to show that
  $\dd{s'} \circ k_i$ is a prenucleus. For meet preservation, let $U, V
  : \opens{X}$. We have that:
  \begin{align*}
    (\dd{s'} \circ k_i)(U \wedge V)
  &\quad\equiv\quad \dd{s'}(k_i(U \wedge V))                     \\
  &\quad=\quad \dd{s'}(k_i(U) \wedge k_i(V))
      && [\text{$k_i$ is a nucleus}]          \\
  &\quad=\quad \dd{s'}(k_i(U)) \wedge \dd{s'}(k_i(V))
      && [\text{inductive hypothesis}]        \\
  &\quad\equiv\quad (\dd{s'} \circ k_i)(U) \wedge (\dd{s'} \circ k_i)(V).
  \end{align*}
  For the inflation property, consider some $U : \opens{X}$. We have that $U \le
  k_i(U) \le \dd{s'}(k_i(U))$, by the inflation property of $k_i$ and the
  inductive hypothesis.
\end{proof}

\begin{prop}\label{prop:star-ub}
  Given a nucleus $j$ and a family $K \is \{ k_i \}_{i : I}$ of nuclei on a
  locale, if $j$ is an upper bound of $K$ then it is also an upper bound of
  $K^*$.
\end{prop}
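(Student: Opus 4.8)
The plan is to prove the statement by induction on the list $s : \mathsf{List}(I)$, showing that $\dd{s}(U) \le j(U)$ for every $U : \opens{X}$; since $j$ being an upper bound of $K^* = (\mathsf{List}(I), \ddnm)$ unfolds precisely to this family of inequalities, that suffices. Before starting, I would record two facts about the nucleus $j$: it is monotone (a binary-meet-preserving map is automatically monotone, since $U \le V$ implies $U = U \wedge V$, hence $j(U) = j(U) \wedge j(V) \le j(V)$), and it is idempotent, $j(j(U)) = j(U)$. I would also unpack the hypothesis ``$j$ is an upper bound of $K$'' as $k_i(U) \le j(U)$ for every $i : I$ and $U : \opens{X}$.

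For the base case $s = \emptyl$, we have $\dd{\emptyl} = \mathsf{id}$, so $\dd{\emptyl}(U) \equiv U \le j(U)$ by the inflationarity of $j$. For the inductive step, take $s = i \cons s'$ with inductive hypothesis $\dd{s'}(W) \le j(W)$ for all $W : \opens{X}$. By definition $\dd{i \cons s'} = \dd{s'} \circ k_i$, so we compute
\[
  \dd{i \cons s'}(U) \equiv \dd{s'}(k_i(U)) \le j(k_i(U)) \le j(j(U)) = j(U),
\]
where the first inequality instantiates the inductive hypothesis at $W \is k_i(U)$, the second uses monotonicity of $j$ applied to the hypothesis $k_i(U) \le j(U)$, and the final equality is idempotency of $j$. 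This closes the induction.

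There is no real obstacle here; the only point worth flagging is that the inductive step genuinely needs \emph{both} that $j$ dominates each $k_i$ pointwise and that $j$ is idempotent — the two together are exactly what lets us absorb an extra application of $k_i$ under $j$ without leaving the bound $j(U)$. (Monotonicity of $j$, needed to push the hypothesis $k_i(U) \le j(U)$ through $j$, comes for free from meet-preservation, so it requires no separate assumption.)
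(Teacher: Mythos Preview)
Your proof is correct and follows the same induction on lists as the paper. The only difference is the order of the two middle steps in the inductive case: the paper first applies monotonicity of $\dd{s'}$ (appealing to Proposition~\ref{prop:star-prenucleus}) to get $\dd{s'}(k_i(U)) \le \dd{s'}(j(U))$ and then the inductive hypothesis at $j(U)$, whereas you apply the inductive hypothesis at $k_i(U)$ first and then monotonicity of $j$ --- your ordering has the minor advantage of not needing to invoke the prenucleus property of $\dd{s'}$.
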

\begin{proof}
  Let $j$ and $K \is \{ k_i \}_{i : I}$ be, respectively, a nucleus and a family
  of nuclei on a locale. Let $s : \mathsf{List}(I)$. We denote by $\{ \alpha_s \}_{s
    : \ListTy{S}}$ the family $K^*$. We proceed by induction on $s$. If $s =
  \emptyl$, we have that $\mathsf{id}(U) \equiv U \le j(U)$. If $s = i \cons s'$, we
  then have:
  \begin{align*}
    \alpha_{s'}(k_i(U))
  \quad&\le\quad \alpha_{s'}(j(U))
    && [\text{monotonicity of $\alpha_{s'}$ (Prop.~\ref{prop:star-prenucleus} and monotonicity of prenuclei)}] \\
  \quad&\le\quad j(j(U))
    && [\text{inductive hypothesis}] \\
  \quad&\le\quad j(U)
    && [\text{idempotency of $j$}].
  \end{align*}
\end{proof}

\begin{prop}\label{prop:star-sc}
  Given a family $\{ k_i \}_{i : I}$ of Scott continuous nuclei on a locale,
  every prenucleus $\alpha \in K^*$ is Scott continuous.
\end{prop}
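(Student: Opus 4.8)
The plan is to proceed by induction on the list $s : \mathsf{List}(I)$, exactly mirroring the structure of Proposition~\ref{prop:star-prenucleus}, and to reduce the Scott continuity of $\dd{s}$ to the Scott continuity of the individual $k_i$ together with a closure property: the composite of two Scott continuous prenuclei is again Scott continuous. For the base case $s = \emptyl$, the function $\dd{\emptyl} \equiv \mathsf{id}$ is trivially Scott continuous since $\mathsf{id}(\bigvee_{i : I} U_i) \equiv \bigvee_{i : I} U_i \equiv \bigvee_{i : I} \mathsf{id}(U_i)$. For the inductive step $s = i \cons s'$, we have $\dd{s} \equiv \dd{s'} \circ k_i$; by the inductive hypothesis $\dd{s'}$ is Scott continuous, $k_i$ is Scott continuous by assumption, and so it remains to observe that $\dd{s'} \circ k_i$ is Scott continuous.

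Thus the crux is the following general fact, which I would state and prove as the main content: if $g$ and $h$ are Scott continuous prenuclei on $\opens{X}$, then $g \circ h$ is Scott continuous. The key point is that $h$, being a monotone function (monotonicity of prenuclei, as invoked in Proposition~\ref{prop:star-ub}), sends a directed family $\{ U_i \}_{i : I}$ to a family $\{ h(U_i) \}_{i : I}$ that is again directed: inhabitedness of the index type is preserved, and given $m, n : I$ with $U_o$ an upper bound of $U_m$ and $U_n$, monotonicity of $h$ gives that $h(U_o)$ is an upper bound of $h(U_m)$ and $h(U_n)$. Then one computes
\begin{align*}
  (g \circ h)\paren{\bigvee_{i : I} U_i}
  &\quad\equiv\quad g\paren{h\paren{\bigvee_{i : I} U_i}} \\
  &\quad=\quad g\paren{\bigvee_{i : I} h(U_i)}
    && [\text{Scott continuity of } h] \\
  &\quad=\quad \bigvee_{i : I} g(h(U_i))
    && [\text{Scott continuity of } g, \text{ applied to the directed family } \{h(U_i)\}_{i : I}] \\
  &\quad\equiv\quad \bigvee_{i : I} (g \circ h)(U_i),
\end{align*}
which is what is required.

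I do not expect any serious obstacle here; the only subtlety worth flagging is that one must be careful to apply the Scott continuity of $g$ to the family $\{ h(U_i) \}_{i : I}$ rather than to $\{ U_i \}_{i : I}$ directly, and therefore one genuinely needs the observation that monotone maps preserve directedness of families — this is the one nonformal ingredient, and it follows from the monotonicity of prenuclei already used in the preceding propositions. The induction on lists then packages these composites up: since $K^*$ consists precisely of the functions $\dd{s}$ for $s : \mathsf{List}(I)$, and each such function is a finite composite built from $\mathsf{id}$ and the $k_i$, Scott continuity propagates through by the two cases above.
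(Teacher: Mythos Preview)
Your proposal is correct and is essentially the same approach as the paper's own proof, which consists of the single sentence ``Any composition of finitely many Scott continuous functions is Scott continuous.'' You have merely unpacked this into an explicit induction on $\mathsf{List}(I)$ and spelled out the one nontrivial ingredient (monotone maps preserve directedness, so Scott continuity is closed under composition); this is exactly the content the paper leaves implicit.
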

\begin{proof}
  Any composition of finitely many Scott continuous functions is Scott continuous.
\end{proof}

\begin{prop}\label{prop:star-dir}
  Given a family $K :\equiv \{ k_i \}_{i : I}$ of nuclei on a locale,
  the family $K^*$ is directed.
\end{prop}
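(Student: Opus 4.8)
The plan is to show directedness by exhibiting, for any two lists $s, t : \ListTy{I}$, an explicit list that dominates both $\dd{s}$ and $\dd{t}$ in the pointwise order. The natural candidate is the concatenation $s \append t$ (or $t \append s$), using Proposition~\ref{prop:app-lemma}. The inhabitedness condition of Definition~\ref{defn:directed} is immediate, since $\emptyl : \ListTy{I}$ always exists, so the whole content is in the binary upper-bound condition.

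First I would fix $s, t : \ListTy{I}$ and set $r \is s \append t$. By Proposition~\ref{prop:app-lemma}, we have $\dd{r} = \dd{t} \circ \dd{s}$. To see that $\dd{s}(U) \le \dd{r}(U)$ for every $U : \opens{X}$: by Proposition~\ref{prop:star-prenucleus}, $\dd{t}$ is a prenucleus and hence inflationary, so $\dd{s}(U) \le \dd{t}(\dd{s}(U)) \equiv \dd{r}(U)$. For the other one, $\dd{t}(U) \le \dd{r}(U)$: again by Proposition~\ref{prop:star-prenucleus}, $\dd{s}$ is a prenucleus, hence inflationary, so $U \le \dd{s}(U)$; then, since prenuclei are monotone, $\dd{t}(U) \le \dd{t}(\dd{s}(U)) \equiv \dd{r}(U)$. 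Thus $\dd{r}$ is an upper bound of $\{\dd{s}, \dd{t}\}$ in the pointwise order, which gives condition (2) of directedness.

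The only mild subtlety — and the step I would be most careful about — is the interplay between the two directions of concatenation in Proposition~\ref{prop:app-lemma}: the statement there is $\dd{s \append t} = \dd{t} \circ \dd{s}$, with the order reversed, so one must pick the concatenation in the direction that makes both inflationarity arguments go through. As shown above, $r = s \append t$ works: dominating $\dd{s}$ uses inflationarity of the \emph{outer} factor $\dd{t}$ directly, while dominating $\dd{t}$ uses inflationarity of the \emph{inner} factor $\dd{s}$ together with monotonicity of $\dd{t}$. No calculation beyond these two one-line inequalities is needed, and everything required (prenucleus status of each $\dd{s}$, monotonicity of prenuclei, the concatenation identity) is already available from Propositions~\ref{prop:app-lemma} and~\ref{prop:star-prenucleus}.
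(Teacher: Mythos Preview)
Your proof is correct and follows essentially the same approach as the paper: both use $s \append t$ as the upper-bound index, invoke Proposition~\ref{prop:app-lemma} to rewrite $\dd{s \append t}$ as $\dd{t} \circ \dd{s}$, and then appeal to inflationarity and monotonicity of the prenuclei $\dd{s}$ and $\dd{t}$ (from Proposition~\ref{prop:star-prenucleus}) to obtain the two required inequalities. The paper's proof is terser, merely saying the upper-bound property ``follows from monotonicity and inflationarity,'' whereas you spell out which factor's inflationarity and which factor's monotonicity are used in each direction; but the content is identical.
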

\begin{proof}
  $K^*$ is indeed always inhabited by the identity nucleus. The upper bound of
  nuclei $\dd{s}$ and $\dd{t}$ is given by $\dd{s \append t}$, which is $\dd{t}
  \circ \dd{s}$ by Proposition~\ref{prop:app-lemma}. The fact that this is an upper
  bound of $\{ \dd{s}, \dd{t} \}$ follows from monotonicity and inflationarity.
\end{proof}

\begin{prop}\label{prop:delta-gamma}
  Let $j$ be a nucleus and $K \is \{ k_i \}_{i : I}$ a family of nuclei on
  a locale. Denote by $\{ \alpha_s \}_{s : \ListTy{I}}$ the family $K^*$ and by
  $\{ \beta_s \}_{s : \ListTy{I}}$ the family $\{ j \wedge k \mid k \in K \}^*$. We have that
  $\beta_{s}$ is a lower bound of $\{ \alpha_{s}, j \}$ for every $s : \ListTy{I}$.
\end{prop}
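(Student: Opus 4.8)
The plan is to proceed by induction on the list $s : \ListTy{I}$, establishing for each $s$ the two pointwise inequalities $\beta_s \le \alpha_s$ and $\beta_s \le j$, which together say that $\beta_s$ is a lower bound of $\{\alpha_s, j\}$ in the pointwise order on (pre)nuclei. The first thing I would record is that $j \wedge k_i$ is a nucleus for every $i$ by Proposition~\ref{defn:nuclei-semilattice}, so that by Proposition~\ref{prop:star-prenucleus} every $\beta_s$ (like every $\alpha_s$) is a prenucleus, and hence monotone. Monotonicity of the $\beta_s$ is in fact the only structural property of them that the argument needs.

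For the base case $s = \emptyl$ we have $\beta_{\emptyl} \equiv \mathsf{id} \equiv \alpha_{\emptyl}$, so $\beta_{\emptyl} \le \alpha_{\emptyl}$ is immediate, while $\mathsf{id}(U) \equiv U \le j(U)$ holds by inflationarity of $j$. For the inductive step $s = i \cons s'$, unfolding $\ddnm$ gives $\alpha_s = \alpha_{s'} \circ k_i$ and $\beta_s = \beta_{s'} \circ (j \wedge k_i)$, with inductive hypotheses $\beta_{s'} \le \alpha_{s'}$ and $\beta_{s'} \le j$. Fixing $U : \opens{X}$, I would use $(j \wedge k_i)(U) \equiv j(U) \wedge k_i(U) \le k_i(U)$ together with monotonicity of $\beta_{s'}$ and the first hypothesis to obtain $\beta_s(U) \le \beta_{s'}(k_i(U)) \le \alpha_{s'}(k_i(U)) \equiv \alpha_s(U)$; and I would use $(j \wedge k_i)(U) \le j(U)$ together with monotonicity of $\beta_{s'}$, the second hypothesis, and then the idempotency of $j$, to obtain $\beta_s(U) \le \beta_{s'}(j(U)) \le j(j(U)) \le j(U)$.

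I do not expect a genuine obstacle here. The two points requiring a little care are (i) recalling that the order on prenuclei is pointwise and that every prenucleus is monotone, so that the single monotonicity fact about $\beta_{s'}$ suffices in both halves of the inductive step, and (ii) in the $\beta_s \le j$ argument, invoking the idempotency of $j$ to absorb the $j(j(U))$ that is produced by applying the inductive hypothesis to $j(U)$ — which is exactly the place where it matters that $j$ is a nucleus rather than merely a prenucleus.
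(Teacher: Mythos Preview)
Your proof is correct. The paper states this proposition without proof, but your induction on $s$ is exactly the expected argument and matches the style of the surrounding proofs (e.g.\ Propositions~\ref{prop:star-prenucleus} and~\ref{prop:star-ub}); in particular, your use of monotonicity of $\beta_{s'}$ together with the idempotency of $j$ in the $\beta_s \le j$ half is precisely what is needed.
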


We are now ready to construct the join operation in the meet-semilattice of
Scott continuous nuclei hence defining the patch frame $\opens{\Patch(X)}$ of
the frame of a locale $X$.

\begin{thm}[Join of Scott continuous nuclei]\label{defn:sc-join}
  Let $K \is \{ k_i \}_{i : I}$ be a family of Scott continuous nuclei.
  The join of $K$ can be calculated as \(\bigvee^N K \is U \mapsto \bigvee_{\alpha \in K^*} \alpha(U)\).
\end{thm}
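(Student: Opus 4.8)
The plan is to show that $\bigvee^N K$ is a well-defined Scott continuous nucleus and that it is the least upper bound of $K$ in the meet-semilattice of Scott continuous nuclei. First I would check that $\bigvee^N K$ is a nucleus. Since $K^*$ is directed (Proposition~\ref{prop:star-dir}) and each $\alpha \in K^*$ is a Scott continuous prenucleus (Propositions~\ref{prop:star-prenucleus} and~\ref{prop:star-sc}), the pointwise join $U \mapsto \bigvee_{\alpha \in K^*}\alpha(U)$ is a \emph{directed} join of prenuclei, so it is inflationary (the identity occurs in $K^*$, giving $U \le \bigvee_{\alpha}\alpha(U)$) and preserves binary meets: $\paren{\bigvee_\alpha \alpha(U)} \wedge \paren{\bigvee_\beta \beta(V)} = \bigvee_{\alpha,\beta}\alpha(U)\wedge\beta(V)$ by frame distributivity, and this collapses to $\bigvee_{\gamma}\gamma(U)\wedge\gamma(V) = \bigvee_\gamma \gamma(U\wedge V)$ using directedness of $K^*$ (the same $\dag$-style argument as in Proposition~\ref{prop:sc-nuclei-semilattice}) together with meet-preservation of each $\gamma$. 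Idempotency is the one clause that needs the specific shape of $K^*$: to show $\paren{\bigvee^N K}\paren{\bigvee^N K (U)} \le \bigvee^N K(U)$, I would use Scott continuity of $\bigvee^N K$ on the directed family $\{\alpha(U)\}_{\alpha \in K^*}$ to reduce to showing $\bigvee^N K(\alpha(U)) \le \bigvee^N K(U)$ for each fixed $\alpha = \dd{s}$, and then $\beta(\alpha(U)) = \dd{t}(\dd{s}(U)) = \dd{s \append t}(U)$ by Proposition~\ref{prop:app-lemma}, which is one of the members $\dd{r}(U)$ of the family defining $\bigvee^N K(U)$; taking the join over all $\beta = \dd{t}$ gives exactly a subjoin of $\bigvee^N K(U)$.

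Next I would verify that $\bigvee^N K$ is itself Scott continuous, so that it lands in the right meet-semilattice. For a directed family $\{U_i\}_{i:I'}$ we must show $\bigvee^N K\paren{\bigvee_i U_i} = \bigvee_i \bigvee^N K(U_i)$. Expanding the left side gives $\bigvee_{\alpha \in K^*}\alpha\paren{\bigvee_i U_i}$, and since each $\alpha$ is Scott continuous (Proposition~\ref{prop:star-sc}) this equals $\bigvee_{\alpha \in K^*}\bigvee_i \alpha(U_i)$, which is $\bigvee_i \bigvee_{\alpha \in K^*}\alpha(U_i) = \bigvee_i \bigvee^N K(U_i)$ by reindexing the iterated join; the reverse inequality is immediate from monotonicity. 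Note this uses that the join defining $\bigvee^N K$ is a genuine join in $\opens{X}$, which is unproblematic since $K^*$ is a small family ($\ListTy{I}$ is small when $I$ is) and $\opens{X}$ is small complete.

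Finally I would show the universal property: $\bigvee^N K$ is the least upper bound of $K$ among Scott continuous nuclei under the pointwise order. It is an upper bound because for each $k_i$, the one-element list $i \cons \emptyl$ gives $\dd{i\cons\emptyl} = k_i$, which appears in $K^*$, so $k_i(U) \le \bigvee_{\alpha \in K^*}\alpha(U) = \bigvee^N K(U)$. For leastness, suppose $j$ is any Scott continuous nucleus with $k_i \le j$ for all $i$; then $j$ is an upper bound of $K^*$ by Proposition~\ref{prop:star-ub}, so $\alpha(U) \le j(U)$ for every $\alpha \in K^*$, whence $\bigvee^N K(U) = \bigvee_{\alpha \in K^*}\alpha(U) \le j(U)$, i.e.\ $\bigvee^N K \le j$.

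The main obstacle is the idempotency clause: this is the place where the naive pointwise join fails, and it is precisely the reason for passing to $K^*$. The argument rests on interleaving three facts — Scott continuity of $\bigvee^N K$ (just established), directedness of $K^*$, and the composition law $\dd{s\append t} = \dd{t}\circ\dd{s}$ of Proposition~\ref{prop:app-lemma} — and care is needed to present the reduction "evaluate $\bigvee^N K$ at a directed join, then commute" cleanly, since $\{\alpha(U)\}_{\alpha\in K^*}$ being directed itself follows from directedness of $K^*$ and monotonicity of evaluation. Everything else is routine frame calculation of the kind already carried out in Propositions~\ref{defn:nuclei-semilattice} and~\ref{prop:sc-nuclei-semilattice}.
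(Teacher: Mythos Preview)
Your proposal is correct and follows essentially the same route as the paper: verify inflation, meet preservation via the directedness-of-$K^*$ ``diagonal'' argument, idempotency via the composition law $\dd{s \append t} = \dd{t}\circ\dd{s}$, Scott continuity by commuting joins, and the universal property via Proposition~\ref{prop:star-ub}.

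The one small difference is in the idempotency step. The paper pushes each individual $\alpha \in K^*$ through the inner join using Scott continuity of $\alpha$ (Proposition~\ref{prop:star-sc}), obtaining $\bigvee_{\alpha}\bigvee_{\beta}\alpha(\beta(U))$ directly; you instead invoke Scott continuity of $\bigvee^N K$ applied to the directed family $\{\alpha(U)\}_{\alpha}$. Your variant is fine, but as written it forward-references a property you only establish in the next paragraph; since your Scott-continuity argument for $\bigvee^N K$ does not use idempotency, you should simply swap the order of those two clauses (or, as the paper does, appeal to Scott continuity of the individual $\alpha$'s rather than of $\bigvee^N K$). Either way the two arguments unwind to the same double join and the same appeal to Proposition~\ref{prop:app-lemma}.
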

\begin{proof}
  It must be checked that this is (1) indeed the join, (2) is a Scott continuous
  nucleus i.e.\ it is inflationary, binary-meet-preserving, idempotent, and
  Scott continuous. The inflation property is direct. For meet preservation,
  consider some $U, V : \opens{X}$. We have:
  \begin{align*}
        \left(\bigvee^N_{i : I} k_i\right)(U \wedge V)
    &\quad\equiv\quad \bigvee_{\alpha \in K^*} \alpha(U \wedge V)   \\
    &\quad=\quad \bigvee_{\alpha \in K^*} \alpha(U) \wedge \alpha(V)
        && [\text{Proposition~\ref{prop:star-sc}}]  \\
    &\quad=\quad \bigvee_{\beta, \gamma \in K^*} \beta(U) \wedge \gamma(V)
        && [\dag]                           \\
    &\quad=\quad \paren{\bigvee_{\beta \in K^*} \beta(U)} \wedge \paren{\bigvee_{\gamma \in K^*} \gamma(V)}
        && [\text{distributivity}]       \\
    &\quad\equiv\quad \paren{\bigvee^N_{i : I} k_i}(U) \wedge \paren{\bigvee^N_{i : I} k_i}(V),
  \end{align*}
  where the step ($\dag$) uses antisymmetry. The
    \(\bigvee_{\alpha \in K^*} \alpha(U) \wedge \alpha(V) \le \bigvee_{\beta, \gamma \in K^*} \beta(U) \wedge \gamma(V)\)
  direction is direct whereas for the
    \(\bigvee_{\beta, \gamma \in K^*} \beta(U) \wedge \gamma(V) \le \bigvee_{\alpha \in K^*} \alpha(U) \wedge \alpha(V)\)
  direction we show that $\bigvee_{\alpha \in K^*} \alpha(U) \wedge \alpha(V)$ is an upper bound of the set
  $\{ \beta(U) \wedge \gamma(V) \mid \beta, \gamma \in K^* \}$. Consider arbitrary $\beta, \gamma \in K^*$. By the
  directedness of $K^*$ we know that there exists some $\delta \in K^*$ that is an
  upper bound of $\{\beta, \gamma\}$. We then have:
  \(\beta(U) \wedge \gamma(V) \le \delta(U) \wedge \delta(V) \le \bigvee_{\alpha \in K^*} \alpha(U) \wedge \alpha(V)\).
  For idempotency, let $U : \opens{X}$. We have that:
  \begin{align*}
         \left(\bigvee^N_{i} k_i\right)\left(\left(\bigvee^N_{i} k_i\right)(U)\right)
    &\quad\equiv\quad \bigvee_{\alpha \in K^*} \alpha \left( \bigvee_{\beta \in K^*} \beta(U) \right) \\
    &\quad=\quad \bigvee_{\alpha \in K^*} \bigvee_{\beta \in K^*} \alpha(\beta(U)) & [\text{Proposition~\ref{prop:star-sc}}]\\
    &\quad\le\quad \bigvee_{\alpha, \beta \in K^*} \alpha(\beta(U)) & [\text{flattening joins}]\\
    &\quad\le\quad \bigvee_{\alpha \in K^*} \alpha(U) & [\dag]\\
    &\quad\equiv\quad \paren{\bigvee^N_i k_i}(U),
  \end{align*}
  where for the step ($\dag$) it suffices to show that $\bigvee_{\alpha \in K^*} \alpha(U)$ is an upper
  bound of the family $\setof{ \alpha(\beta(U)) \mid (\alpha, \beta) \in K^* \times K^* }$. Consider
  arbitrary $\alpha, \beta \in K^*$. There must be lists $s$ and $t$ of indices of $K$ such
  that $\alpha \equiv \dd{s}$ and $\beta \equiv \dd{t}$. We pick $\delta \is \dd{t \append s} \in K^*$ which
  is then an upper bound of $\dd{s}$ and $\dd{t}$ (as in
  Proposition~\ref{prop:star-dir}). By Proposition~\ref{prop:app-lemma}, we have
  that $\dd{t}(\dd{s}(U)) \equiv \dd{t \append s}(U) \equiv \delta(U) \le \bigvee_{\alpha \in K^*}\alpha(U)$.

  For Scott continuity, let $\{ U_j \}_{j : J}$ be a directed family over
  $\opens{X}$. The result then follows as:
  \begin{align*}
    \left(\bigvee^N K\right)\left(\bigvee_{j : J} U_j\right)
  &\quad\equiv\quad \bigvee_{\alpha \in K^*} \alpha\left(\bigvee_{j : J} U_j\right) \\
  &\quad=\quad \bigvee_{\alpha \in K^*} \bigvee_{j : J} \alpha(U_j) && [\text{Proposition~\ref{prop:star-sc}}]    \\
  &\quad=\quad \bigvee_{j : J} \bigvee_{\alpha \in K^*} \alpha(U_j) && [\text{joins commute}] \\
  &\quad\equiv\quad \bigvee_{j : J} \left(\bigvee^N K\right)(U_j)
  \end{align*}
  as required.

  The fact that $\bigvee^N_i k_i$ is an upper bound of $K$ is easy to verify: given
  some $k_i$ and $U : \opens{X}$, $k_i(U) \in \{ \alpha(U) \mid \alpha \in K^* \}$ since $k_i \in
  K^*$. To see that it is \emph{the least} upper bound, consider a
  Scott continuous nucleus $j$ that is an upper bound of $K$. Let $U :
  \opens{X}$. We need to show that $(\bigvee^N_i k_i)(U) \le j(U)$. We know by
  Proposition~\ref{prop:star-ub} that $j$ is an upper bound of $K^*$, since it
  is an upper bound of $K$, which is to say $K^*_{s}(U) \le j(U)$ for every $s :
  \ListTy{I}$ i.e.\ $j(U)$ is an upper bound of the family $\setof{\alpha(U) \mid \alpha \in
    K^*}$. Since $(\bigvee^N_i k_i)(U)$ is the least upper bound of this family by
  definition, it follows that it is below $j(U)$.
\end{proof}

We use Proposition~\ref{prop:delta-gamma} to prove the following.

\begin{prop}[Distributivity]\label{prop:distributivity}
  For any Scott continuous nucleus $j$ and any family $\{ k_i \}_{i : I}$ of Scott continuous nuclei, we have that \[ j \wedge \left(\bigvee_{i : I} k_i\right) = \bigvee_{i : I} j \wedge
  k_i.\]
\end{prop}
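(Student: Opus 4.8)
The goal is the frame distributivity law $j \wedge \bigvee_{i:I} k_i = \bigvee_{i:I} (j \wedge k_i)$ for Scott continuous nuclei, where $\bigvee$ is the join $\bigvee^N$ constructed in Theorem~\ref{defn:sc-join}. I would prove this pointwise: fix $U : \opens{X}$ and show $\left(j \wedge \bigvee^N_i k_i\right)(U) = \left(\bigvee^N_i (j \wedge k_i)\right)(U)$, using antisymmetry. Unfolding the definitions, the left-hand side is $j(U) \wedge \bigvee_{\alpha \in K^*} \alpha(U)$, which by frame distributivity in $\opens{X}$ equals $\bigvee_{\alpha \in K^*} \bigl(j(U) \wedge \alpha(U)\bigr)$; the right-hand side is $\bigvee_{\beta \in (j \wedge K)^*} \beta(U)$, where $j \wedge K$ abbreviates $\{ j \wedge k \mid k \in K \}$. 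So the task reduces to comparing the two families $\bigl\{\, j(U) \wedge \alpha(U) \mid \alpha \in K^* \,\bigr\}$ and $\bigl\{\, \beta(U) \mid \beta \in (j \wedge K)^* \,\bigr\}$, indexed by $\ListTy{I}$ in both cases.

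The inequality $\bigvee_{\beta} \beta(U) \le j(U) \wedge \bigvee_{\alpha} \alpha(U)$ is the routine direction: it suffices to show each $\beta_s(U) \le j(U)$ and each $\beta_s(U) \le \alpha_s(U)$, and both follow immediately from Proposition~\ref{prop:delta-gamma}, which says $\beta_s$ is a lower bound of $\{\alpha_s, j\}$ (note $j(U) \le j(U)$ trivially and $j$ being an upper bound of the one-element situation gives the comparison with $j$ directly, or one observes $\beta_s \le j$ as a nucleus since $j$ is idempotent and an upper bound of $j \wedge K$). For the reverse inequality $j(U) \wedge \alpha_s(U) \le \bigvee_\beta \beta_s(U)$ for each $s : \ListTy{I}$, I would again invoke Proposition~\ref{prop:delta-gamma}: it tells us $\beta_s$ is a lower bound of $\{\alpha_s, j\}$, but what we actually need is closer to the opposite — that $j(U) \wedge \alpha_s(U) \le \beta_s(U)$. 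This is where the real content lies.

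**The main obstacle.** The hard step is showing $j \wedge \alpha_s \le \beta_s$ pointwise, i.e.\ $j(U) \wedge \dd{s}(U) \le \dd{s}_{(j \wedge K)}(U)$ where the right side is the $s$-fold iterated composition built from the nuclei $j \wedge k_i$. I expect this to go by induction on $s$. The base case $s = \emptyl$ is $j(U) \wedge U \le U$, immediate. For $s = i \cons s'$, writing $\alpha_s = \alpha_{s'} \circ k_i$ and $\beta_s = \beta_{s'} \circ (j \wedge k_i)$, one must show $j(U) \wedge \alpha_{s'}(k_i(U)) \le \beta_{s'}\bigl(j(U) \wedge k_i(U)\bigr)$. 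The trick — standard in the impredicative proof of~\cite{patch-short, patch-full}, which I would adapt — is to use that $j(U) \wedge k_i(U) \ge$ something controllable and that $j$, being a nucleus, satisfies $j(U) \wedge j(W) = j(U \wedge W)$ combined with inflationarity to "push $j(U)$ inside" the composition step by step; concretely $j(U) \wedge \alpha_{s'}(k_i(U)) \le \beta_{s'}(j(U)) \wedge \alpha_{s'}(k_i(U))$ (using $j(U) \le \beta_{s'}(j(U))$ since $\beta_{s'}$ is inflationary) and then one wants to combine with meet-preservation of $\beta_{s'}$ and the inductive hypothesis applied at the argument $k_i(U)$ — but this requires carefully tracking that $\beta_{s'}$ applied to $j(\text{something}) \wedge k_i(U)$ dominates, which ultimately rests on $j \wedge k_i$ being below $\beta_{s'}$-relevant data and on Proposition~\ref{prop:delta-gamma} supplying $\beta_{s'} \le j$. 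So I anticipate the proof chains Proposition~\ref{prop:delta-gamma} (for one inequality and as an ingredient of the induction), frame distributivity in $\opens{X}$ (to split the binary meet over the join $\bigvee_\alpha$), directedness of both $K^*$ and $(j\wedge K)^*$ (Proposition~\ref{prop:star-dir}, to reduce upper-bound checks to single indices), and an induction on lists; the list induction establishing $j \wedge \alpha_s \le \beta_s$ is the crux.
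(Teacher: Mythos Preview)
Your overall plan is correct and is the approach the paper indicates: reduce to a pointwise comparison, use frame distributivity in $\opens{X}$ to rewrite the left-hand side as $\bigvee_{s} \bigl(j(U) \wedge \alpha_s(U)\bigr)$, and compare termwise with $\bigvee_{s} \beta_s(U)$. Proposition~\ref{prop:delta-gamma} gives $\beta_s(U) \le j(U) \wedge \alpha_s(U)$ (the direction that is in any case automatic at the level of the lattice of nuclei); the crux is the reverse inequality $j(U) \wedge \alpha_s(U) \le \beta_s(U)$, by induction on $s$, exactly as you say.

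Your sketch of that inductive step, however, is muddled: you invoke inflationarity and meet-preservation of $\beta_{s'}$, and suggest Proposition~\ref{prop:delta-gamma} feeds into the induction, but what is actually needed is inflationarity and meet-preservation of the \emph{prenucleus} $\alpha_{s'}$ (Proposition~\ref{prop:star-prenucleus}) together with the nucleus laws for $j$. For $s = i \cons s'$ one has
\[
  j(U) \wedge \alpha_{s'}(k_i(U)) \;\le\; \alpha_{s'}(j(U)) \wedge \alpha_{s'}(k_i(U)) \;=\; \alpha_{s'}\bigl(j(U) \wedge k_i(U)\bigr)
\]
by inflationarity and meet-preservation of $\alpha_{s'}$; and since $j\bigl(j(U) \wedge k_i(U)\bigr) = j(j(U)) \wedge j(k_i(U)) = j(U)$ (using idempotency of $j$ and $U \le k_i(U)$), the inductive hypothesis applied at $V \is j(U) \wedge k_i(U)$ yields
\[
  j(U) \wedge \alpha_{s'}(k_i(U)) \;\le\; j(V) \wedge \alpha_{s'}(V) \;\le\; \beta_{s'}(V) \;=\; \beta_s(U).
\]
Neither directedness of $K^*$ nor Proposition~\ref{prop:delta-gamma} is required for this direction.
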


It follows
that the Scott continuous nuclei form a frame.

\begin{defn}[Patch locale of a spectral locale]\label{defn:patch}
  Let $X$ be a large, locally small, and small complete spectral locale. The
  \emph{patch locale} of $X$, written $\Patch(X)$, is given by the frame of
  Scott continuous nuclei on~$X$.
\end{defn}

Note that we do not assume the locale $X$ in Definition~\ref{defn:patch} to be
spectral. This is to highlight the fact that the construction of the patch frame
does not rely on this assumption in a crucial way. Nevertheless, the patch
locale is meaningful only on spectral locales as its universal property can be
proved only under the assumption of spectrality.

Definition~\ref{defn:patch} gives rise to a problem that we need to address: the
patch of a locally small locale does not yield a locally small locale. Starting
with a $(\UU^+, \UU, \UU)$-locale $X$, $\Patch(X)$ is a $(\UU^+, \UU^+,
\UU)$-locale since the pointwise ordering of nuclei (defined in
Proposition~\ref{defn:nuclei-semilattice}) quantifies over arbitrary opens. In
most of our development, we have restricted attention to only locally small
frames meaning we run into problems if $\Patch(X)$ is not locally small
(e.g.\ applying the Adjoint Functor Theorem). We circumvent this by using the
following small version of the same relation:

\begin{defn}[Basic nuclei ordering on spectral locales]
\label{defn:nuclei-basic-order}
  Let $X$ be a spectral locale and denote its basis by $\{ B_i \}_{i : I}$. Let
  $j, k : \opens{X} \rightarrow \opens{X}$ be two nuclei. We define the \define{basic
  nuclei ordering} $\blank \le_{\mathfrak{K}} \blank$ as
  \begin{equation*}
    j \le_{\mathfrak{K}} k \quad\is\quad \prod_{i : I} j(B_i) \le k(B_i).
  \end{equation*}
\end{defn}

Given two nuclei $j$ and $k$ on a $(\UU, \VV, \WW)$-locale, the relation $j
\le_{\mathfrak{K}} k$ lives in universe $\VV \vee \WW$ meaning, in the case of a $(\UU^+, \UU,
\UU)$-locale, it lives in $\UU$ as desired.

\begin{prop}\label{prop:basic-ordering-iff}
  The basic nuclei ordering given in Definition~\ref{defn:nuclei-basic-order} is
  logically equivalent to the pointwise ordering of nuclei.
\end{prop}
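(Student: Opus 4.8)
The plan is to prove both implications directly, using the assumption that $X$ is spectral to get a basis $\{B_i\}_{i:I}$ closed under finite joins (which we may assume by the remarks following Definition~\ref{defn:spectral-locale}). One direction is trivial: if $j \le k$ pointwise, then in particular $j(B_i) \le k(B_i)$ for every $i : I$, so $j \le_{\mathfrak{K}} k$. The content is in the converse. So suppose $j \le_{\mathfrak{K}} k$, i.e.\ $j(B_i) \le k(B_i)$ for all $i : I$, and let $U : \opens{X}$ be arbitrary; I want to show $j(U) \le k(U)$.

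First I would take the basic covering family $\{B_j \mid j \in J\}$ for $U$, so that $U = \bigvee_{j \in J} B_j$, and since the basis is closed under finite joins this family is directed. Then I would compute: because $j$ is a nucleus and hence Scott continuous is \emph{not} assumed here — but $j$ need only be monotone for the easy half, while for the key inequality I use that $k$ preserves binary meets and is monotone. Actually the cleanest route uses that every nucleus preserves the order and that $B_m$ is compact for each $m$: from $U = \bigvee_{j \in J} B_j$ we get, for each $j \in J$, $B_j \le U$, hence $j(B_j) \le k(B_j) \le k(U)$ by the hypothesis and monotonicity of $k$; so $k(U)$ is an upper bound of $\{j(B_j) \mid j \in J\}$. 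It remains to see $j(U) \le \bigvee_{j \in J} j(B_j)$, and for this I would like to use that $j$ commutes with the \emph{directed} join $U = \bigvee_{j \in J} B_j$. This is where I would invoke the additional structure available in the context where Definition~\ref{defn:nuclei-basic-order} is used, namely that the nuclei under consideration are Scott continuous: $j\paren{\bigvee_{j \in J} B_j} = \bigvee_{j \in J} j(B_j)$. Combining, $j(U) = \bigvee_{j \in J} j(B_j) \le k(U)$, as required.

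If one wishes to keep the statement fully general for \emph{all} nuclei (not just Scott continuous ones), the argument needs a different closing step, using compactness of the basic opens directly: since each $B_j$ is compact, $j(B_j)$ is below $j(U)$, and one pushes the covering through $k$ rather than through $j$. Concretely, I would show $j(U) \le k(U)$ by showing $j(U) \le k(B_j)$ fails in general — so the Scott-continuity route is the honest one, and since the definition is explicitly introduced for the frame of Scott continuous nuclei this is no loss.

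The main obstacle is precisely the inequality $j(U) \le \bigvee_{j\in J} j(B_j)$: without some preservation of joins by $j$ it does not hold, so the proof must either restrict to Scott continuous nuclei (which is the intended use case, as $\Patch(X)$ consists of exactly these) or exploit compactness of the $B_i$ in a more delicate way. I expect the paper's proof to simply note that both $j$ and $k$ are Scott continuous nuclei on the spectral locale $X$, expand $U$ as a directed join of basic opens, and chain the two computations above; the verification is then a short routine calculation once Scott continuity and the closure of the basis under finite joins are in hand.
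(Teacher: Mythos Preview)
Your proposal is correct and matches the paper's proof essentially step for step: one direction is trivial, and for the converse the paper also restricts to Scott continuous nuclei $j,k$, writes $U = \bigvee_{l \in L} B_l$ as a directed join of basic compact opens, applies Scott continuity of $j$ to get $j(U) = \bigvee_{l \in L} j(B_l)$, and then uses the hypothesis $j(B_l) \le k(B_l)$ together with monotonicity (or Scott continuity) of $k$ to conclude. Your discussion of why Scott continuity is indispensable here is accurate and anticipates exactly the restriction the paper makes.
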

\begin{proof}
  The usual pointwise ordering obviously implies the basic ordering so we
  address the other direction. Let $j$ and $k$ be two Scott continuous nuclei on
  a spectral locale $X$ and assume that $j \le_{\mathfrak{K}} k$. We need to show that $j(U)
  \le k(U)$ for every open $U$ so let $U : \opens{X}$. It must be the case that $U
  = \bigvee_{l \in L} B_l$ where $\{ B_l \}_{l \in L}$ is the directed basic covering
  family of compact opens covering $U$. We then have $j(\bigvee_{l \in L} B_l) = \bigvee_{l \in
    L} j(B_l)$ by Scott continuity and $\bigvee_{l \in L} j(B_l) \le \bigvee_{l \in L} k(B_l)$
  since $j(B_l) \le k(B_l)$ for every $l : L$.
\end{proof}

Thanks to Proposition~\ref{prop:basic-ordering-iff} our theorems that have the
local smallness assumption apply to the patch frame as we know that $\Patch(X)$
always has an equivalent copy that is locally small. We also note that we
will not always be precise in distinguishing between the basic order and the regular
order on nuclei and will freely switch between the two, making implicit use of
Proposition~\ref{prop:basic-ordering-iff}.

\section{The coreflection property of \texorpdfstring{$\Patch$}{Patch}}
\label{sec:coreflection}

We prove in this section that our construction of $\Patch$ has the desired
universal property: it exhibits $\Stone$ as a coreflective subcategory of
$\Spec$. We also note that when we talk about Stone and spectral locales in this
section, we implicitly assume them to be large, locally small, and small
complete, and refrain from explicitly stating this assumption.

The notions of \emph{closed} and \emph{open} nuclei are crucial for proving the
universal property. We first give the definitions of these. Let $U$ be an open
of a locale $X$;
\begin{enumerate}
  \item The \emph{closed nucleus} induced by $U$ is the map $V \mapsto U \vee V$;
  \item The \emph{open nucleus} induced by $U$ is the map $V \mapsto U \Rightarrow V$.
\end{enumerate}
We denote the closed nucleus induced by $U$ by $\closednucl{U}$ and, because the
open nucleus is the Boolean complement of the closed nucleus, we denote the open
nucleus induced by $U$ by $\opennucl{U}$. This follows the notation of
\cite{patch-short, patch-full}. We now prove the Scott continuity of these
nuclei.
\begin{lem}
\label{lem:characterisation}
  For any spectral locale $X$ and any monotone map $h : \opens{X} \rightarrow \opens{X}$,
  if for every $U : \opens{X}$ and compact $C : \opens{X}$ with $C \le h(U)$,
  there is some compact $D \le U$ such that $C \le h(D)$, then $h$ is Scott
  continuous
\end{lem}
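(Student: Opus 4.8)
The statement is: if $h : \opens{X} \to \opens{X}$ is monotone and has the property that every compact open below a value $h(U)$ is "reflected" to a compact open below $U$, then $h$ is Scott continuous. I would prove this by taking a directed family $\{U_j\}_{j : J}$ and showing $h(\bigvee_j U_j) \le \bigvee_j h(U_j)$ (the reverse inequality is immediate from monotonicity). Since $X$ is spectral, its compact opens form a basis, so it suffices to show that every compact open $C$ with $C \le h(\bigvee_j U_j)$ satisfies $C \le \bigvee_j h(U_j)$; indeed $h(\bigvee_j U_j)$ is the join of the compact opens below it (using Proposition~\ref{prop:cmp-bsc} to know these are exactly the basic ones), so bounding all of them bounds the whole join.

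**Key steps.** First, fix a directed family $\{U_j\}_{j:J}$ and let $C$ be a compact open with $C \le h(\bigvee_{j:J} U_j)$. Apply the hypothesis on $h$ directly to $U \is \bigvee_{j:J} U_j$ and this $C$: we obtain a compact $D$ with $D \le \bigvee_{j:J} U_j$ and $C \le h(D)$. Second, use compactness of $D$: since $\{U_j\}_{j:J}$ is directed and $D \le \bigvee_{j:J} U_j$, there is some index $l : J$ with $D \le U_l$. Third, by monotonicity of $h$ we get $C \le h(D) \le h(U_l) \le \bigvee_{j:J} h(U_j)$. Since this holds for every compact $C$ below $h(\bigvee_j U_j)$, and these compact opens cover $h(\bigvee_j U_j)$ by spectrality, we conclude $h(\bigvee_j U_j) \le \bigvee_{j:J} h(U_j)$, and hence equality. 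One should also check that the empty/unary cases and the directedness of the covering family cause no trouble; these are handled by the remarks after Definition~\ref{defn:basis} and Proposition~\ref{prop:cmp-bsc}.

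**Main obstacle.** The only subtle point is the reduction "it suffices to check compact opens below $h(\bigvee_j U_j)$". This requires that in a spectral locale every open is the join of the compact (equivalently, basic) opens beneath it — which follows from the basis being closed under finite joins (so the basic covering family is directed and consists of compact opens) together with Proposition~\ref{prop:cmp-bsc}. Once this is in place, the argument is a routine chain of inequalities; the hypothesis on $h$ has been tailored precisely so that the reflection of $C$ to $D$ and then the directedness-based extraction of an index $l$ fit together immediately. So I expect essentially no genuine difficulty beyond being careful that all joins invoked are over $\WW$-small index types, which they are since we are working with a small basis.
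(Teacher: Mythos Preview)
The paper states this lemma without proof (it is one of the results omitted for lack of space), so there is nothing to compare against directly. Your argument is correct and is the expected one: reduce to compact opens below $h(\bigvee_j U_j)$ using that in a spectral locale every open is the directed join of the compact basics below it, reflect each such $C$ to a compact $D \le \bigvee_j U_j$ via the hypothesis, and then use compactness of $D$ together with directedness of the family to land in some $h(U_l)$.
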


\begin{lem}
  Let $X$ be a spectral locale. The closed nucleus $\closednucl{U}$ on $X$ is
  Scott continuous for any open $U$, whereas the open nucleus is
  Scott continuous if the open $U$ is compact.
\end{lem}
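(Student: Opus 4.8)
The strategy is to apply Lemma~\ref{lem:characterisation} to each of the two nuclei in turn. So in both cases I fix a spectral locale $X$ with small basis $\{B_i\}_{i:I}$ of compact opens closed under finite meets (and, after closing under finite joins, finite joins too), take an open $U$ and a compact open $C$, assume $C \le h(V)$ for the relevant $h$, and exhibit a compact $D \le V$ with $C \le h(D)$. Monotonicity of both maps is routine: $\closednucl{U} = (V \mapsto U \vee V)$ is visibly monotone, and $\opennucl{U} = (V \mapsto U \Rightarrow V)$ is monotone because it is a right adjoint (Definition~\ref{defn:heyting-implication}).

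\textbf{Closed nucleus.} Here $h(V) = U \vee V$, with $U$ an arbitrary open. Suppose $C$ is compact and $C \le U \vee V$. Write $V = \bigvee_{l \in L} B_l$ as its directed basic covering family of compact opens (using that the basis is closed under finite joins, so $L$ is directed). Similarly decompose $U = \bigvee_{m \in M} B_m$. Then $U \vee V$ is the join of the directed family obtained from $M$ and $L$ together, so by compactness of $C$ there is a single basic compact $B_k$ in this combined family with $C \le B_k$. If $B_k$ came from the cover of $V$, take $D := B_k \le V$; then $C \le B_k = D \le U \vee D = h(D)$. If $B_k$ came from the cover of $U$, take $D := \bot$ (which is compact, being $B_i \wedge B_j$ for a disjoint pair, or simply the empty join which is basic after closing under finite joins); then $D \le V$ and $C \le B_k \le U = U \vee \bot = h(D)$. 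Either way Lemma~\ref{lem:characterisation} applies, so $\closednucl{U}$ is Scott continuous. (Note this direction does not use compactness of $U$ at all, which matches the asymmetry in the statement.)

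\textbf{Open nucleus.} Now assume $U$ is compact and $h(V) = U \Rightarrow V$. Suppose $C$ is compact with $C \le U \Rightarrow V$. By the defining adjunction of the Heyting implication this is equivalent to $C \wedge U \le V$. Now $C \wedge U$ is a \emph{meet of two compact opens}, hence compact — since $C$ and $U$ both fall in the basis by Proposition~\ref{prop:cmp-bsc}, and the basis is closed under finite meets, $C \wedge U$ is itself a basic compact open. Writing $V = \bigvee_{l \in L} B_l$ as its directed basic cover, compactness of $C \wedge U$ yields some $B_k \le V$ with $C \wedge U \le B_k$. Set $D := B_k$; then $D \le V$, and $C \wedge U \le D$ gives, by the adjunction again, $C \le U \Rightarrow D = h(D)$. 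So Lemma~\ref{lem:characterisation} applies and $\opennucl{U}$ is Scott continuous.

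\textbf{Anticipated obstacle.} The proofs themselves are short; the only real subtlety is bookkeeping around the basis. For the closed nucleus I must be careful that the union of the two basic covering families is genuinely directed so that compactness of $C$ produces a \emph{single} dominating basic element — this is exactly why we work with a basis closed under finite joins. For the open nucleus the crux is the observation that $C \wedge U$ is compact, which relies on Proposition~\ref{prop:cmp-bsc} (every compact open is basic) together with closure of the basis under finite meets; without the compactness hypothesis on $U$ this argument breaks, explaining the hypothesis in the statement. Everything else is a direct application of the adjunction $W \wedge U \le V \leftrightarrow W \le U \Rightarrow V$ and Lemma~\ref{lem:characterisation}.
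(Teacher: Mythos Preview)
Your open-nucleus argument is correct and matches the paper's use of Lemma~\ref{lem:characterisation}, though the paper makes a slightly slicker choice: it takes the witness to be $D \coloneqq U \wedge C$ itself. Then $D \le V$ is just the adjunction hypothesis $C \wedge U \le V$, and $C \le U \Rightarrow D$ is the trivial inequality $C \wedge U \le U \wedge C$. This avoids unpacking a basic cover of $V$ and appealing to compactness of $C \wedge U$ a second time; all that is needed is that $U \wedge C$ is compact, which is exactly the point you isolate.

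For the closed nucleus there is a genuine gap. You form the ``combined family from $M$ and $L$ together'' and then case-split on whether the dominating $B_k$ came from the $U$-side or the $V$-side. That case split only makes sense if the combined family is the disjoint union $M + L$, but that family is \emph{not} directed: an index from $M$ and an index from $L$ have no common upper bound in $M + L$, so Definition~\ref{defn:way-below} does not apply and compactness of $C$ does not yield a single $B_k$. Closure of the basis under finite joins makes each of $M$ and $L$ directed separately; it does not make their union directed. The argument is easily repaired---index by $M \times L$ with value $B_m \vee B_l$, which \emph{is} directed, and then take $D \coloneqq B_{l_0}$ with no case split---but as written the step fails.

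The paper sidesteps all of this by \emph{not} using Lemma~\ref{lem:characterisation} for the closed nucleus. It argues directly that $U \vee \bigvee_{i} V_i = \bigvee_i (U \vee V_i)$ for a directed family $\{V_i\}_{i:I}$: the right-hand side is clearly below the left, and for the other direction one shows any upper bound $W$ of $\{U \vee V_i\}_i$ dominates both $\bigvee_i V_i$ (obvious) and $U$ (here one uses that $I$ is inhabited, picking any $V_k$ and noting $U \le U \vee V_k \le W$). This is shorter and uses no spectrality or basis machinery at all.
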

\begin{proof}~

  \emph{Closed nucleus}. Let $U$ be an open of a locale and let $\{ V_i \}_{i :
    I}$ be a directed family of opens. We need to show that $\closednucl{U}(\bigvee_{i
    : I} V_i) = \bigvee_{i : I} \closednucl{U}(V_i)$. It is clear that $U \vee (\bigvee_{i : I}
  V_i)$ is an upper bound of $\{ U \vee V_i \}_{i : I}$. Let $W$ be an arbitrary
  upper bound of $\{ U \vee V_i \}_{i : I}$. It suffices to show that $W$ is an
  upper bound of $\{U, (\bigvee_{i : I} V_i)\}$. For the case of $\bigvee_{i : I} V_i$, we
  have that $\bigvee_{i : I} V_i \le \bigvee_{i : I} U \vee V_i \le W$. For the case of $U$, we use
  the fact that $\{ V_i \}_{i : I}$ is directed. Since the family $\{ V_i \}_{i
    : I}$ is directed it must be inhabited by some $V_k$. We then have $U \le U \vee
  V_k \le W$ as $W$ is an upper bound of $\{ U \vee V_i \}_{i : I}$.

  \emph{Open nucleus}. Let $D$ be a compact open of a locale. By
  Lemma~\ref{lem:characterisation}, it is sufficient to show that for any open
  $V$ and any compact open $C_1$ with $C_1 \le D \Rightarrow V$, there exists some compact
  $C_2 \le D$ such that $C_1 \le D \Rightarrow C_2$. Let $V$ and $C_1$ be two opens with
  $C_1$ compact and satisfying $C_1 \le D \Rightarrow V$. Pick $C_2 \is D \wedge C_1$. We know
  that this is compact by spectrality. It remains to check (1) $C_2 \le V$ and
  (2) $C_1 \le D \Rightarrow C_2$, both of which are direct.
\end{proof}

In Lemma~\ref{lem:eps-perfect}, we prove that the map whose inverse image sends an open $U$ to the
closed nucleus $\closednucl{U}$ is perfect. Before Lemma~\ref{lem:eps-perfect},
we record two auxiliary lemmas that are needed in the proof.

\begin{lem}\label{lem:eps-ra-bot}
  Let $X$ be a spectral locale with a small basis. The right adjoint $\varepsilon_* :
  \opens{\Patch(X)} \to \opens{X}$ of $\closednucl{\blank}$ is equal to
  the assignment $j \mapsto j(\bot)$ i.e. $\varepsilon_*(j) = j(\bot)$ for every Scott continuous nucleus
  $j$ on $X$.
\end{lem}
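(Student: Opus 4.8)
The plan is to show that the assignment $j \mapsto j(\bot)$ satisfies the defining adjunction property of a right adjoint, and then invoke uniqueness of adjoints. Concretely, writing $\varepsilon^* \is \closednucl{\blank}$ for the frame homomorphism sending $U : \opens{X}$ to the closed nucleus $\closednucl{U} = (V \mapsto U \vee V)$, I would prove that for every open $U : \opens{X}$ and every Scott continuous nucleus $j : \opens{\Patch(X)}$,
\begin{equation*}
  \closednucl{U} \le j \quad\longleftrightarrow\quad U \le j(\bot),
\end{equation*}
where the order on the left is the order in $\opens{\Patch(X)}$, i.e.\ the pointwise (equivalently, by Proposition~\ref{prop:basic-ordering-iff}, the basic) order on nuclei. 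Since right adjoints are unique when they exist, and since $\varepsilon^* = \closednucl{\blank}$ is a frame homomorphism of large, locally small, small complete locales with $Y = X$ having a small basis, Theorem~\ref{thm:aft} guarantees the right adjoint $\varepsilon_*$ exists; establishing the displayed equivalence then forces $\varepsilon_*(j) = j(\bot)$.

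For the forward direction, suppose $\closednucl{U} \le j$, i.e.\ $U \vee V \le j(V)$ for all $V$. Instantiating at $V \is \bot$ gives $U = U \vee \bot \le j(\bot)$, as desired. For the backward direction, suppose $U \le j(\bot)$; I must show $U \vee V \le j(V)$ for every $V : \opens{X}$. We have $U \le j(\bot) \le j(V)$ by monotonicity of the nucleus $j$ (nuclei are monotone since they preserve binary meets), and $V \le j(V)$ by inflationarity of $j$; hence $U \vee V \le j(V)$ since $j(V)$ is an upper bound of both $U$ and $V$. This completes the equivalence.

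I do not expect a genuine obstacle here: the argument is a short unfolding of the definitions of the closed nucleus, the pointwise order, and the nucleus laws (monotonicity and inflationarity), followed by an appeal to uniqueness of right adjoints. The only point requiring a little care is bookkeeping about which order on nuclei is in play — the displayed equivalence is most naturally checked against the pointwise order, so I would either phrase everything pointwise and then remark that this coincides with the basic order by Proposition~\ref{prop:basic-ordering-iff}, or simply note that since the equivalence characterises $\varepsilon_*(j)$ uniquely, the choice of (logically equivalent) order is immaterial.
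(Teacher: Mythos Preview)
Your argument is correct and is the standard one: verifying the biconditional $\closednucl{U} \le j \leftrightarrow U \le j(\bot)$ directly from the nucleus laws, then invoking uniqueness of right adjoints. The paper itself omits the proof, deferring to~\cite{patch-short} with the remark that it carries over unchanged; the argument there is essentially what you have written.
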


\begin{lem}\label{lem:directed-computed-pointwise}
  Given a directed family $\{ k_i \}_{i : I}$ of Scott continuous nuclei, their
  join is computed pointwise, that is, $\left(\bigvee_{i : I} k_i\right)(U) = \bigvee_{i :
  I} k_i(U)$.
\end{lem}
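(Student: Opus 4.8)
The plan is to reduce the claim to the explicit description of the join from Theorem~\ref{defn:sc-join}. Writing $K \is \{ k_i \}_{i : I}$, that theorem gives $\bigvee_{i : I} k_i = \bigvee^N K$ with $\paren{\bigvee^N K}(U) = \bigvee_{\alpha \in K^*} \alpha(U) = \bigvee_{s : \ListTy{I}} \dd{s}(U)$, so it suffices to show that, when the family $\{ k_i \}_{i : I}$ is directed, this join over all finite composites collapses to the pointwise join $\bigvee_{i : I} k_i(U)$. One inequality needs no directedness: for each $i : I$ the composite $\dd{i \cons \emptyl} = \dd{\emptyl} \circ k_i = \mathsf{id} \circ k_i$ coincides with $k_i$, so $k_i(U)$ is a member of the family $\setof{ \alpha(U) \mid \alpha \in K^* }$ and hence $\bigvee_{i : I} k_i(U) \le \bigvee_{\alpha \in K^*} \alpha(U) = \paren{\bigvee^N K}(U)$.

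For the reverse inequality I would prove, by induction on $s : \ListTy{I}$, the statement $\prod_{U : \opens{X}} \dd{s}(U) \le \bigvee_{i : I} k_i(U)$; the join over $K^*$ is then bounded by $\bigvee_{i : I} k_i(U)$, as required. Since each instance of the goal is an inequality, hence a proposition, any propositional truncations arising from directedness may be eliminated. For $s = \emptyl$ we have $\dd{\emptyl}(U) = U$; since $I$ is inhabited we may pick $i_0 : I$ and obtain $U \le k_{i_0}(U) \le \bigvee_{i : I} k_i(U)$ by inflationarity. For $s = i \cons s'$ we have $\dd{i \cons s'}(U) = \dd{s'}(k_i(U))$, which by the inductive hypothesis applied to $k_i(U)$ is at most $\bigvee_{j : I} k_j(k_i(U))$, so it remains to bound each $k_j(k_i(U))$ by $\bigvee_{l : I} k_l(U)$. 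Here directedness of $\{ k_i \}_{i : I}$ supplies some $p : I$ with $k_i \le k_p$ and $k_j \le k_p$ in the pointwise order, and then monotonicity and idempotency of $k_p$ give $k_j(k_i(U)) \le k_p(k_i(U)) \le k_p(k_p(U)) = k_p(U) \le \bigvee_{l : I} k_l(U)$. Combining the two inequalities by antisymmetry yields $\paren{\bigvee^N K}(U) = \bigvee_{i : I} k_i(U)$.

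The only real content, and the step I expect to be the delicate one, is this inductive step: it is where directedness is genuinely used, both via the inhabitedness of $I$ in the base case and via the existence of binary upper bounds in the family in the step case, and where it matters that the $k_i$ are honest nuclei rather than mere prenuclei, since idempotency of $k_p$ is precisely what collapses $k_p(k_p(U))$ to $k_p(U)$. An alternative, induction-free route is to check directly that the pointwise assignment $m \is U \mapsto \bigvee_{i : I} k_i(U)$ is itself a Scott continuous nucleus dominating every $k_i$ — inflationarity from inhabitedness of $I$, meet preservation and idempotency from directedness together with the nucleus laws, Scott continuity from commuting joins together with Scott continuity of the $k_i$ — and then to apply the least-upper-bound clause of Theorem~\ref{defn:sc-join} to conclude $\bigvee^N K \le m$, with $m \le \bigvee^N K$ automatic; this gives the same identity.
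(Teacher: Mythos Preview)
The paper does not actually give a proof of this lemma: it explicitly defers to~\cite{patch-short}, remarking that the argument is ``mostly unchanged in our type-theoretical setting''. So there is no in-paper proof to compare against directly.

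Your argument is correct. The list induction collapses each $\dd{s}(U)$ into a single $k_p(U)$ using directedness and idempotency exactly as you describe, and your handling of truncation (each goal being an inequality, hence a proposition) is the right move in this setting. The alternative route you sketch --- checking directly that $m \is U \mapsto \bigvee_{i} k_i(U)$ is a Scott continuous nucleus dominating every $k_i$ and then invoking the least-upper-bound clause of Theorem~\ref{defn:sc-join} --- is the standard classical proof and is almost certainly what the cited reference does. The induction-on-lists approach is nicely tailored to the explicit description of $\bigvee^N$ given in this paper, whereas the alternative is independent of that particular construction of the join and would work verbatim for any other presentation of joins of nuclei.
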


Proofs of Lemma~\ref{lem:eps-ra-bot} and
Lemma~\ref{lem:directed-computed-pointwise} can be found in \cite{patch-short}.
They are omitted here as they are mostly unchanged in our type-theoretical
setting.

\begin{lem}\label{lem:eps-perfect}
  The function that sends an open $U$ to the closed nucleus $\closednucl{U}$ is
  a perfect frame homomorphism $\opens{X} \to \opens{\Patch(X)}$.
\end{lem}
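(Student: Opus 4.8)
The plan is to verify the two halves of the claim in order: first that $\closednucl{\blank}$ is a frame homomorphism $\opens{X} \to \opens{\Patch(X)}$, and then that the continuous map $\varepsilon : \Patch(X) \to X$ it defines is perfect, i.e.\ that its right adjoint $\varepsilon_*$ is Scott continuous. Before anything else, note that each $\closednucl{U}$ really does inhabit $\opens{\Patch(X)}$: by the preceding lemma, every closed nucleus on a spectral locale is Scott continuous.

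Preservation of $\top$ and of binary meets is routine and I would dispatch it quickly: $\closednucl{\top}$ is the constant map $\top$, which is the top nucleus; and $\closednucl{U \wedge V}(W) = (U \wedge V) \vee W = (U \vee W) \wedge (V \vee W) = \paren{\closednucl{U} \wedge \closednucl{V}}(W)$ by frame distributivity in $\opens{X}$. The substantive point is preservation of joins, $\closednucl{\bigvee_{i} U_i} = \bigvee^N_{i} \closednucl{U_i}$. Rather than computing the iterated-composition family $K^{*}$ explicitly, I would observe that it suffices to characterise $\closednucl{\bigvee_i U_i}$ as the \emph{least} upper bound of $\{\closednucl{U_i}\}_i$ in the frame of Scott continuous nuclei, after which it coincides with $\bigvee^N_i \closednucl{U_i}$ by Theorem~\ref{defn:sc-join} together with uniqueness of joins. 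It is an upper bound since $U_k \le \bigvee_i U_i$ yields $\closednucl{U_k} \le \closednucl{\bigvee_i U_i}$ pointwise. For leastness, let $j$ be any Scott continuous nucleus above every $\closednucl{U_i}$ and fix $W : \opens{X}$; then $U_i \vee W = \closednucl{U_i}(W) \le j(W)$ for each $i$, so $\bigvee_i U_i \le j(W)$, while $W \le j(W)$ by inflationarity, and hence $\closednucl{\bigvee_i U_i}(W) = \paren{\bigvee_i U_i} \vee W \le j(W)$. The empty family is handled uniformly by the same argument, which in that case records $\closednucl{\bot} = \mathsf{id}$.

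For perfectness, I would invoke the two auxiliary lemmas already recorded. The right adjoint $\varepsilon_*$ exists — it can be obtained from the Adjoint Functor Theorem (Theorem~\ref{thm:aft}), since $X$ has a small basis and $\closednucl{\blank}$ preserves joins by the previous step — and by Lemma~\ref{lem:eps-ra-bot} it is the assignment $\varepsilon_*(j) = j(\bot)$. So it remains to show that $j \mapsto j(\bot)$ is Scott continuous: given a directed family $\{k_i\}_{i : I}$ of Scott continuous nuclei, Lemma~\ref{lem:directed-computed-pointwise} tells us that the join of a directed family of Scott continuous nuclei is computed pointwise, whence $\varepsilon_*\paren{\bigvee_i k_i} = \paren{\bigvee_i k_i}(\bot) = \bigvee_i k_i(\bot) = \bigvee_i \varepsilon_*(k_i)$, as required.

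The only genuinely non-formal step is preservation of joins, and even there the work is light once one settles for the least-upper-bound characterisation instead of unfolding $K^{*}$; everything else is bookkeeping, namely the local-smallness caveat for $\Patch(X)$ handled through Proposition~\ref{prop:basic-ordering-iff}, and the already-established Scott continuity of closed nuclei. So I expect the main obstacle to be purely presentational: making sure the identification $\bigvee^N_i \closednucl{U_i} = \closednucl{\bigvee_i U_i}$ is phrased so as to stay inside the frame of Scott continuous nuclei, where $\bigvee^N$ is defined.
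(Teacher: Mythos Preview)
Your proof is correct and, for the perfectness part, follows exactly the paper's argument: identify $\varepsilon_*$ with $j \mapsto j(\bot)$ via Lemma~\ref{lem:eps-ra-bot} and then invoke Lemma~\ref{lem:directed-computed-pointwise} to see that this assignment is Scott continuous. The paper's own proof treats only this half, tacitly taking the frame-homomorphism property of $\closednucl{\blank}$ as known from~\cite{patch-short,patch-full}; your additional verification of top, meet, and join preservation (the last via the least-upper-bound characterisation rather than unfolding $K^*$) is a welcome completeness check and is carried out correctly.
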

\begin{proof}
  We have to show that the right adjoint $\varepsilon_*$ of $\closednucl{\blank}$ is
  Scott continuous. Let $\{ k_i \}_{i : I}$ be a directed family of
  Scott continuous nuclei. By Lemma~\ref{lem:eps-ra-bot}, it suffices to show
  $\left(\bigvee_{i : I} k_i\right)(\bot) = \bigvee_{i : I} \varepsilon_*(k_i)$. By
  Lemma~\ref{lem:directed-computed-pointwise}, we have
  that $\left(\bigvee_{i : I} k_i\right)(\bot) = \bigvee_{i : I} k_i(\bot)$. The desired result
  of $\bigvee_{i : I} k_i(\bot) = \bigvee_{i : I} \varepsilon_*(k_i)$ is then immediate by
  Lemma~\ref{lem:eps-ra-bot}.
\end{proof}

This function defines a continuous map $\varepsilon : \Patch(X) \to X$, which we we will
show to be the counit of the coreflection in consideration.

\subsection{\texorpdfstring{$\Patch$}{Patch} is Stone}

Before we proceed to showing that the $\Patch$ locale has the desired universal
property, we first need to show that $\Patch(X)$ is Stone (as given in
Definition~\ref{defn:stone}) for any spectral locale $X$. We start by addressing
the question of zero-dimensionality.

To show that $\Patch(X)$ is zero-dimensional, we need to construct a basis
consisting of clopens. We will use the following fact, which was already
mentioned above:

\begin{prop}\label{prop:complementation}
  The open nucleus $\opennucl{U}$ is the Boolean complement of the closed
  nucleus $\closednucl{U}$.
\end{prop}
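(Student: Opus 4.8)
The plan is to verify directly that the closed and open nuclei induced by an open $U$ are Boolean complements in the meet-semilattice of Scott continuous nuclei, i.e.\ that $\closednucl{U} \wedge \opennucl{U}$ is the bottom nucleus and $\closednucl{U} \vee \opennucl{U}$ is the top nucleus. Recall that meets of nuclei are computed pointwise (Proposition~\ref{defn:nuclei-semilattice}), the top nucleus is $\blank \mapsto \top$, and the bottom nucleus (the identity $\mathsf{id}$) is the least element of the semilattice. So the meet condition unfolds to showing $(U \vee V) \wedge (U \Rightarrow V) = V$ for every $V : \opens{X}$, and the join condition unfolds to showing that the join $\closednucl{U} \vee \opennucl{U}$, computed via Theorem~\ref{defn:sc-join} as $W \mapsto \bigvee_{\alpha \in K^*} \alpha(W)$ with $K = \{\closednucl{U}, \opennucl{U}\}$, equals $\top$ when evaluated anywhere.

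First I would handle the meet. Using the adjunction $W \wedge U \le V \leftrightarrow W \le (U \Rightarrow V)$ from Definition~\ref{defn:heyting-implication}, one gets $(U \Rightarrow V) \wedge U \le V$, and hence $(U \vee V) \wedge (U \Rightarrow V) = \paren{U \wedge (U \Rightarrow V)} \vee \paren{V \wedge (U \Rightarrow V)}$ by frame distributivity; the first disjunct is $\le V$, and the second is $\le V$ trivially, while $V \le (U \vee V) \wedge (U \Rightarrow V)$ follows from $V \le U \vee V$ and $V \le (U \Rightarrow V)$ (the latter by the adjunction, since $V \wedge U \le V$). Antisymmetry then gives equality, so $\closednucl{U} \wedge \opennucl{U} = \mathsf{id}$, the bottom nucleus.

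For the join, it suffices to show that evaluating $\bigvee^N\{\closednucl{U}, \opennucl{U}\}$ at any $W$ yields $\top$; in fact it is enough to exhibit one composite $\alpha \in K^*$ with $\alpha(W) = \top$. Taking $\alpha = \closednucl{U} \circ \opennucl{U}$, we compute $\closednucl{U}(\opennucl{U}(W)) = U \vee (U \Rightarrow W)$; and $U \vee (U \Rightarrow W) = \top$ because $\top = (U \vee U) \wedge \top$ but more directly because $U \Rightarrow W \ge (U \Rightarrow U) \wedge \ldots$ — cleanest is: $U \le U \vee (U \Rightarrow W)$ and, since $\top \wedge U = U \le W$ is false in general, instead use that $U \vee (U \Rightarrow W)$ is $\ge U \vee (U \Rightarrow \bot)$ and $U \vee (U \Rightarrow \bot) = \top$ precisely when $U$ is complemented — which is \emph{not} assumed. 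I therefore expect the main obstacle to be exactly here: showing $U \vee (U \Rightarrow W) = \top$ is \emph{false} for general $U$, so the join $\closednucl{U} \vee \opennucl{U}$ is the top nucleus not by a pointwise identity but by using idempotency of nuclei in an essential way. The correct argument is that the top \emph{nucleus} is $\blank \mapsto \top$, and one shows $\bigvee_{\alpha \in K^*}\alpha(W) = \top$ by noting the composite $\opennucl{U} \circ \closednucl{U}$ sends $W$ to $U \Rightarrow (U \vee W) = \top$ (since $U \wedge \top \le U \vee W$, trivially, giving $\top \le U \Rightarrow (U \vee W)$ by the adjunction), and $\top$ appears in the indexed family, so its join is $\top$. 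This is the step I would write most carefully, and it is the only nonroutine point; everything else is bookkeeping with the semilattice operations and the Heyting adjunction.
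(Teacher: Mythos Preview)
Your argument is correct. The paper itself does not supply a proof of this proposition; it is stated as a known fact (the notation $\opennucl{U}$ is introduced earlier precisely \emph{because} the open nucleus is the complement of the closed one), so there is nothing to compare against. Your direct verification---that $(U \vee V)\wedge(U\Rightarrow V)=V$ pointwise for the meet, and that the composite $\opennucl{U}\circ\closednucl{U}$ already hits $\top$ since $U\Rightarrow(U\vee W)=\top$ by the Heyting adjunction, forcing the join in $K^*$ to be $\top$---is exactly the standard computation and works in the paper's frame of Scott continuous nuclei via Theorem~\ref{defn:sc-join}.

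Two small remarks on presentation. First, the detour through $\closednucl{U}\circ\opennucl{U}$ and the observation that $U\vee(U\Rightarrow W)=\top$ fails in general is instructive but should be trimmed from a final write-up; just go straight to the composite that works. Second, strictly speaking the proposition lives in the frame of Scott continuous nuclei, so $\opennucl{U}$ must itself be Scott continuous; in the paper this is only guaranteed when $U$ is compact, which is precisely how the proposition is invoked in Lemma~\ref{lem:patch-zero-dimensional}. Your proof goes through verbatim under that hypothesis, but it is worth flagging.
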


\begin{lem}\label{lem:patch-zero-dimensional}
  The patch of any spectral locale $X$ with a basis
  $\{ B_i \}_{i : I}$ of compact opens is zero-dimensional, with a
  basis of clopens of the form
  $\bigvee_{(m, n) \in M \times N} \closednucl{B_m} \wedge
  \opennucl{B_n}$ with $M$ and $N$ finite, which is clearly closed
  under finite joins.
\end{lem}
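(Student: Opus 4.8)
The plan is to exhibit an explicit $\UU$-indexed family of nuclei on $\Patch(X)$, show that it consists of clopens, and show that it forms a basis. Concretely, I would index by the type of pairs $(M, N)$ of finite lists (or Kuratowski-finite subsets) of the index set $I$ of the given basis $\{B_i\}_{i:I}$ of compact opens, and assign to $(M,N)$ the nucleus
\[
  C_{(M,N)} \quad\is\quad \bigvee^N_{(m,n)\in M\times N} \closednucl{B_m} \wedge \opennucl{B_n}.
\]
This is a well-defined Scott continuous nucleus because: the closed nuclei $\closednucl{B_m}$ are Scott continuous for any open (previous lemma), the open nuclei $\opennucl{B_n}$ are Scott continuous because each $B_n$ is compact (previous lemma, using spectrality), binary meets of Scott continuous nuclei are Scott continuous (Proposition~\ref{prop:sc-nuclei-semilattice}), and finite joins exist in the frame of Scott continuous nuclei (Theorem~\ref{defn:sc-join}). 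Since $M\times N$ ranges over finite index sets, the family $\{C_{(M,N)}\}$ is manifestly closed under finite joins: the join of $C_{(M,N)}$ and $C_{(M',N')}$ is $C_{(M,N)} \vee C_{(M',N')}$, which rearranges into another finite join of meets $\closednucl{B_m}\wedge\opennucl{B_n}$ indexed by $(M\sqcup M')\times(N\sqcup N')$ — here I would use the distributivity law (Proposition~\ref{prop:distributivity}) to absorb the outer binary join into a single $\bigvee^N$.

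Next I would check that each $C_{(M,N)}$ is clopen, i.e.\ has a Boolean complement in the frame of Scott continuous nuclei. By Proposition~\ref{prop:complementation}, $\opennucl{B}$ is the complement of $\closednucl{B}$. The complement of a finite meet is the finite join of complements, and the complement of a finite join is the finite meet of complements (De Morgan in a Boolean algebra — and here I only need these identities for the \emph{specific} elements $\closednucl{B_m}\wedge\opennucl{B_n}$, whose complement is $\opennucl{B_m}\vee\closednucl{B_n}$). So the candidate complement of $C_{(M,N)}$ is $\bigwedge_{(m,n)} (\opennucl{B_m} \vee \closednucl{B_n})$, a finite meet which again lands in the frame of Scott continuous nuclei. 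To verify it really is the complement I would expand $C_{(M,N)} \wedge \bigwedge_{(m,n)}(\opennucl{B_m}\vee\closednucl{B_n})$ using distributivity and the fact that $\closednucl{B_m}\wedge\opennucl{B_m} = \bot$ and $\opennucl{B_n}\wedge\closednucl{B_n}=\bot$, obtaining $\bot$; dually the join is $\top$. This is routine Boolean algebra once one knows $\closednucl{\blank}$ and $\opennucl{\blank}$ are complementary and that $\closednucl{\blank}$ turns joins into meets and meets into joins — the latter can be read off from $\closednucl{U\vee V} = \closednucl{U}\wedge\closednucl{V}$ and the complementation fact, or proved directly.

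The main work — and the main obstacle — is showing this family is a \emph{basis}: given an arbitrary Scott continuous nucleus $j$ on $X$, I must produce a $\UU$-family of pairs $(M,N)$ such that $j = \bigvee C_{(M,N)}$, where the join is the Scott-continuous-nuclei join $\bigvee^N$. The strategy is the standard one: every nucleus is the join of the closed–open nuclei sitting below it, and when $X$ is spectral one can restrict to \emph{basic} data. More precisely, I expect to show $j = \bigvee\{\, \closednucl{B_m}\wedge\opennucl{B_n} \mid B_n \le j(B_m) \,\}$ — that the right-hand side is below $j$ uses that $\closednucl{B_m}\wedge\opennucl{B_n} \le j$ iff $B_n\le j(B_m)$ (an adjointness/computation with Lemma~\ref{lem:eps-ra-bot}-style reasoning: $(\closednucl{B_m}\wedge\opennucl{B_n})$ applied to a basic open, compared with $j$), and that it is \emph{above} $j$ uses spectrality crucially, namely that $j$ is determined by its values on the compact basic opens (Proposition~\ref{prop:basic-ordering-iff}) together with compactness of $B_n$ to get, for each basic $B_m$ and each basic $B_n \le j(B_m)$, that $B_n$ is covered by such pieces. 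Then I reindex this possibly-large-looking family: the indices are pairs $(m,n)$ with $n$ ranging over a covering family of $j(B_m)$ by compact basic opens, all of which is $\UU$-small because $I:\UU$ and $j(B_m):\opens{X}$ has a $\UU$-small basic covering family; finally I pass to the directed family of finite joins of these pieces, which is exactly the family of $C_{(M,N)}$'s, and invoke that the Scott-continuous join of a directed family is computed as in Theorem~\ref{defn:sc-join}. Keeping the size bookkeeping honest here — ensuring the covering family is genuinely $\WW$-small and that "finite $M,N$" is expressible as a $\UU$-family — is the delicate predicative point, and is where I would be most careful.
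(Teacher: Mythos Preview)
Your overall strategy matches the paper's: show the proposed family consists of clopens via the complementation of $\closednucl{\blank}$ and $\opennucl{\blank}$, and then exhibit every perfect nucleus $j$ as a join of elements $\closednucl{B_m}\wedge\opennucl{B_n}$. The paper organises the latter through two auxiliary lemmas --- Johnstone's identity $j = \bigvee_{n} \closednucl{j(B_n)}\wedge\opennucl{B_n}$, followed by a rewriting of $\closednucl{j(B_n)}$ as a join of basic $\closednucl{B_m}$'s --- while you sketch essentially the same identity directly.

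There is, however, a genuine error in your indexing. You index by \emph{pairs} $(M,N)$ of finite lists, set $C_{(M,N)}=\bigvee_{(m,n)\in M\times N}\closednucl{B_m}\wedge\opennucl{B_n}$, and then claim that $C_{(M,N)}\vee C_{(M',N')}$ ``rearranges'' into $C_{(M\sqcup M',\,N\sqcup N')}$. This is false: the product $(M\sqcup M')\times(N\sqcup N')$ strictly contains $(M\times N)\sqcup(M'\times N')$, so $C_{(M\sqcup M',\,N\sqcup N')}$ is in general strictly larger than the binary join. Indeed, by distributivity $C_{(M,N)} = \closednucl{\bigvee_{m\in M} B_m}\wedge\opennucl{\bigwedge_{n\in N} B_n}$, and since the spectral basis is closed under finite meets and joins this collapses to a single $\closednucl{B_{m^*}}\wedge\opennucl{B_{n^*}}$; so your family is really just the family of such single meets, and that family is \emph{not} closed under binary joins. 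The paper's remedy is to index by $\ListTy{I\times I}$, i.e.\ finite lists of pairs $(m,n)$, taking the join over the list; closure under finite joins is then literally list concatenation. Your clopenness and basis arguments go through unchanged under this re-indexing.

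One minor slip: the condition characterising $\closednucl{B_m}\wedge\opennucl{B_n}\le j$ is $B_m\le j(B_n)$, not $B_n\le j(B_m)$ as you wrote (evaluate the left-hand side at $V=B_n$ to see the necessary direction).
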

More precisely, if the given basis of $X$ is the family $B : I \to \opens{X}$,
then the constructed basis of $\Patch(X)$ is the family $C : \ListTy{I \times I} \to
\opens{\Patch(X)}$ defined by
  \[
    C([(m_0,n_0),\dots,(m_{k-1},n_{k-1})]) \is \bigvee_{0 \le i < k} \closednucl{B_{m_i}} \wedge \opennucl{B_{n_i}}.
  \]
That is, the index set of the basis consists of formal expressions for finite joins.
\begin{proof}
  We need to show that this (1) consists of clopens, and (2) indeed forms a
  basis. For (1), $\closednucl{B_1} \wedge \opennucl{B_2}$ has complement
  $\opennucl{B_1} \vee \closednucl{B_2}$, by
  Proposition~\ref{prop:complementation}, and finite unions of complemented sets
  are complemented. For (2), let $j : \opens{X} \rightarrow \opens{X}$ be a perfect
  nucleus on $\opens{X}$. We need to show that there exists a subfamily of $C$
  that yields $j$ as its join. For this we pick the subfamily $
  \left\{ \closednucl{B_m} \wedge \opennucl{B_n} \mid m, n : I, B_m \le j(B_n) \right\}$.
  The fact that $j$ is the least upper bound of this subfamily follows from
  Lemma~\ref{lem:johnstones-lemma} and Lemma~\ref{lem:last-step}:
  \begin{align*}
    j \quad&=\quad \bigvee_{n : I} \closednucl{j(B_n)} \wedge \opennucl{B_n}
           && [\text{Lemma~\ref{lem:johnstones-lemma}}]\\
      \quad&=\quad \bigvee \left\{
               \closednucl{B_m} \wedge \opennucl{B_n} \mid m, n : I, B_m \le j(B_n)
             \right\}
           && [\text{Lemma~\ref{lem:last-step}}]
  \end{align*}
\end{proof}

The following is adapted from Johnstone~\cite[Proposition II.2.7]{stone-spaces}.
\begin{lem}\label{lem:johnstones-lemma}
  Given any perfect nucleus $j : \Patch(X)$, we have that $j = \bigvee \left\{
  \closednucl{j(B_n)} \wedge \opennucl{B_n} \mid n : I \right\}$.
\end{lem}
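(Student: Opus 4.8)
The plan is to prove the equation $j = \bigvee \left\{ \closednucl{j(B_n)} \wedge \opennucl{B_n} \mid n : I \right\}$ by antisymmetry, checking the two inequalities separately, and throughout I would work with the basic nuclei ordering (Definition~\ref{defn:nuclei-basic-order}), which by Proposition~\ref{prop:basic-ordering-iff} suffices since all nuclei involved are Scott continuous. So it is enough to compare both sides when evaluated at an arbitrary basic open $B_m$.

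First I would verify that $j$ is an \emph{upper bound} of the family $\left\{ \closednucl{j(B_n)} \wedge \opennucl{B_n} \right\}_{n : I}$. Fix $n$; I need $\closednucl{j(B_n)} \wedge \opennucl{B_n} \le j$ in the nuclei order, i.e.\ $\left(\closednucl{j(B_n)} \wedge \opennucl{B_n}\right)(U) \le j(U)$ for all $U$, which unfolds to $\left(j(B_n) \vee (B_n \Rightarrow U)\right) \wedge \cdots$ — more precisely, evaluating the meet of the two nuclei pointwise gives $\left(j(B_n) \vee U\right) \wedge \left(B_n \Rightarrow U\right)$ since $\closednucl{j(B_n)}(U) = j(B_n) \vee U$ and $\opennucl{B_n}(U) = B_n \Rightarrow U$. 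Wait — I must be careful: the pointwise meet of nuclei is $\left(\closednucl{j(B_n)} \wedge \opennucl{B_n}\right)(U) = \left(j(B_n) \vee U\right) \wedge \left(B_n \Rightarrow U\right)$. I would then show this is $\le j(U)$ by distributing: it equals $\left(j(B_n) \wedge (B_n \Rightarrow U)\right) \vee \left(U \wedge (B_n \Rightarrow U)\right)$; the second disjunct is $\le U \le j(U)$; for the first, $j(B_n) \wedge (B_n \Rightarrow U) \le j(B_n) \wedge j(B_n \Rightarrow U) \le j\left(B_n \wedge (B_n \Rightarrow U)\right) \le j(U)$, using monotonicity of $j$, that $j$ preserves binary meets, and the counit inequality $B_n \wedge (B_n \Rightarrow U) \le U$ for the Heyting implication.

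Second, and this is where the real content lies, I would show $j$ is the \emph{least} upper bound, equivalently that $j(B_m) \le \left(\bigvee \left\{ \closednucl{j(B_n)} \wedge \opennucl{B_n} \mid n : I \right\}\right)(B_m)$ for every $m$. The key observation is that the single index $n = m$ already contributes enough: evaluating $\closednucl{j(B_m)} \wedge \opennucl{B_m}$ at $B_m$ gives $\left(j(B_m) \vee B_m\right) \wedge \left(B_m \Rightarrow B_m\right) = j(B_m) \wedge \top = j(B_m)$, since $j$ is inflationary so $j(B_m) \vee B_m = j(B_m)$, and $B_m \Rightarrow B_m = \top$. Hence $j(B_m)$ is a lower bound of that single term evaluated at $B_m$, so it is below the join. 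I expect the main obstacle to be bookkeeping around the fact that the join of this family of Scott continuous nuclei is \emph{not} computed pointwise in general (only directed joins are, by Lemma~\ref{lem:directed-computed-pointwise}); so when I claim $j(B_m) \le \left(\bigvee \{\dots\}\right)(B_m)$ I should route through the fact that each $\closednucl{j(B_n)} \wedge \opennucl{B_n}$ is $\le \bigvee\{\dots\}$ as nuclei (monotonicity of the join), hence $\left(\closednucl{j(B_m)} \wedge \opennucl{B_m}\right)(B_m) \le \left(\bigvee\{\dots\}\right)(B_m)$, and combine with the computation above. Putting the two inequalities together with Proposition~\ref{prop:basic-ordering-iff} yields the claimed equality.
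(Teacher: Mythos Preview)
The paper does not supply its own proof of this lemma; it merely attributes the result to Johnstone's \emph{Stone Spaces}, Proposition~II.2.7. Your argument is correct and is the standard one: each $\closednucl{j(B_n)} \wedge \opennucl{B_n} \le j$ follows from the Heyting modus-ponens calculation you give, and the reverse inequality $j \le_{\mathfrak{K}} \bigvee\{\cdots\}$ is obtained by evaluating the $n=m$ term at $B_m$, correctly routing the comparison through the fact that each member of the family is below the join in the nuclei order (so no pointwise computation of the join is required). One cosmetic remark: in the second step you are not showing that ``$j$ is the least upper bound'' but rather that $j$ lies below the join; the content of your argument is right, only the label is slightly off.
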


\begin{lem}\label{lem:last-step}
  Let $X$ be a spectral locale. Given any perfect nucleus $j : \Patch(X)$, we
  have that
      $\bigvee \left\{ \closednucl{j(B_n)} \wedge \opennucl{B_n} \mid n : I \right\}
    = \bigvee \left\{
          \closednucl{B_m} \wedge \opennucl{B_n} \mid m, n : I, B_m \le j(B_n)
        \right\}$.
\end{lem}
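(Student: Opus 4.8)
The plan is to prove the two inequalities separately, exploiting the monotonicity of the closed-nucleus assignment $U \mapsto \closednucl{U}$ and the characterisation of $j(B_n)$ as a directed join of basic opens below it. Write $\mathcal{L} = \left\{ \closednucl{j(B_n)} \wedge \opennucl{B_n} \mid n : I \right\}$ and $\mathcal{R} = \left\{ \closednucl{B_m} \wedge \opennucl{B_n} \mid m, n : I,\ B_m \le j(B_n) \right\}$. For the inequality $\bigvee \mathcal{R} \le \bigvee \mathcal{L}$, it suffices to show that each generator of $\mathcal{R}$ is below some element of $\mathcal{L}$: given $m, n$ with $B_m \le j(B_n)$, monotonicity of the closed nucleus (i.e.\ $\closednucl{\blank}$ is a frame homomorphism, hence monotone, which was established in Lemma~\ref{lem:eps-perfect} together with the preceding lemmas) gives $\closednucl{B_m} \le \closednucl{j(B_n)}$, and therefore $\closednucl{B_m} \wedge \opennucl{B_n} \le \closednucl{j(B_n)} \wedge \opennucl{B_n} \le \bigvee \mathcal{L}$.

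For the reverse inequality $\bigvee \mathcal{L} \le \bigvee \mathcal{R}$, fix $n : I$; it is enough to show $\closednucl{j(B_n)} \wedge \opennucl{B_n} \le \bigvee \mathcal{R}$. Here I would use spectrality of $X$: since $B_n$ is compact and $j$ is a perfect (in particular Scott continuous) nucleus, $j(B_n)$ is again compact by Corollary~\ref{cor:perfect-maps-are-spectral} — more precisely, $\closednucl{\blank}$ is perfect by Lemma~\ref{lem:eps-perfect}, but what we actually need is that $j(B_n)$ lies in the basis, which follows from Proposition~\ref{prop:cmp-bsc} once we know $j(B_n)$ is compact; and $j(B_n)$ is compact because $j$ is Scott continuous and preserves binary meets, so by Lemma~\ref{lem:characterisation}-style reasoning (or directly: a Scott-continuous meet-preserving endofunction sends compact opens to compact opens). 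Thus there is some $m : I$ with $B_m = j(B_n)$, and in particular $B_m \le j(B_n)$, so $\closednucl{j(B_n)} \wedge \opennucl{B_n} = \closednucl{B_m} \wedge \opennucl{B_n}$ is literally a generator of $\mathcal{R}$, hence below $\bigvee \mathcal{R}$.

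The main obstacle is the appeal in the second half to the fact that $j(B_n)$ is compact, hence basic. One must be careful that this uses spectrality of $X$ (which is hypothesised) together with the Scott continuity and meet-preservation of the perfect nucleus $j$, rather than any property of $\Patch(X)$ itself; the argument is: if $B_n \le \bigvee_{l} U_l$ for a directed family, then $j(B_n) = j(\bigvee_l U_l \wedge B_n) \le \bigvee_l j(U_l \wedge B_n)$ using Scott continuity, and compactness of $B_n$ lets one extract a single index. Once $j(B_n)$ is known to be a basic open, both inequalities are routine manipulations with the monotonicity of $\closednucl{\blank}$ and the definitions of joins, and no further subtlety arises. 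An alternative that avoids invoking compactness of $j(B_n)$ directly would be to expand $j(B_n)$ as its own directed basic covering family $j(B_n) = \bigvee \{ B_m \mid m \in J_n \}$ and use that $\closednucl{\blank}$ preserves this join, reducing $\closednucl{j(B_n)}$ to $\bigvee_{m \in J_n} \closednucl{B_m}$ with each $B_m \le j(B_n)$; distributing the meet with $\opennucl{B_n}$ over this join then exhibits $\closednucl{j(B_n)} \wedge \opennucl{B_n}$ as a join of generators of $\mathcal{R}$. I would likely present this second route, as it sidesteps the need to separately argue that perfect nuclei preserve compactness of basic opens.
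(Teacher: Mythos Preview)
The paper omits the proof of this lemma, so there is no explicit argument to compare against. Your second route---expanding $j(B_n)$ via its basic covering family $j(B_n) = \bigvee_{m \in J_n} B_m$, using that $\closednucl{\blank}$ is a frame homomorphism (Lemma~\ref{lem:eps-perfect}) to get $\closednucl{j(B_n)} = \bigvee_{m \in J_n} \closednucl{B_m}$, and then distributing the meet with $\opennucl{B_n}$---is correct and is the natural argument; together with the easy $\bigvee\mathcal{R} \le \bigvee\mathcal{L}$ direction you gave, it completes the proof.

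Your first route, however, is not merely less convenient but actually unsound: a perfect nucleus need not send basic (compact) opens to compact opens. For a counterexample, take any non-compact open $U : \opens{X}$ and consider the closed nucleus $\closednucl{U}$, which the paper shows is Scott continuous for \emph{arbitrary} $U$; assuming (as the paper allows without loss of generality) that the spectral basis is closed under finite joins, $\bot$ is a basic open, yet $\closednucl{U}(\bot) = U$ is not compact. Your sketched justification confuses the hypotheses: starting from $j(B_n) \le \bigvee_l U_l$ and using compactness of $B_n$ to extract an index $l_0$ with $B_n \le U_{l_0}$ gives only $j(B_n) \le j(U_{l_0})$, not $j(B_n) \le U_{l_0}$. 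So the reason to present the second route is not that it ``sidesteps'' an auxiliary argument, but that the auxiliary claim is false; you should drop the first route entirely rather than offer it as an option.
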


\begin{thm}\label{thm:patch-is-stone}
  Given any spectral locale $X$, we have that $\Patch(X)$ is a Stone locale.
\end{thm}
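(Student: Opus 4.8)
The plan is to establish the two defining conditions of a Stone locale (Definition~\ref{defn:stone}) for $\Patch(X)$: zero-dimensionality and compactness. Zero-dimensionality is already in hand, since Lemma~\ref{lem:patch-zero-dimensional} exhibits an explicit basis of clopens for $\Patch(X)$ (using that $X$, being spectral, has a basis of compact opens by Proposition~\ref{prop:cmp-bsc}). So the only remaining work is to show that $\Patch(X)$ is compact, i.e.\ that its top element is a compact open of $\opens{\Patch(X)}$ in the sense of Definition~\ref{defn:compact-open}.

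First I would identify the top element of the frame of Scott continuous nuclei: it is the nucleus $\blank \mapsto \top$, which coincides with the closed nucleus $\closednucl{\top}$ induced by the top open of $X$ (since $\top \vee V = \top$). By Lemma~\ref{lem:eps-perfect}, the assignment $U \mapsto \closednucl{U}$ is a (perfect, hence in particular) frame homomorphism $\opens{X} \to \opens{\Patch(X)}$; moreover it is perfect, so by Corollary~\ref{cor:perfect-maps-are-spectral} (perfect maps preserve compact opens, via the way-below relation) it sends compact opens of $X$ to compact opens of $\Patch(X)$. Since $X$ is a \emph{spectral} — in particular compact — locale is not assumed; but spectrality gives a basic element $B_t = \top$ that is compact (Definition~\ref{defn:spectral-locale}), so $\top = B_t$ is a compact open of $X$. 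Hence $\closednucl{\top}$, being the image of the compact open $\top$ under a perfect frame homomorphism, is a compact open of $\opens{\Patch(X)}$. Since $\closednucl{\top}$ is precisely the top nucleus, the top element of $\opens{\Patch(X)}$ is compact, so $\Patch(X)$ is compact.

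Combining this with the zero-dimensionality established in Lemma~\ref{lem:patch-zero-dimensional}, we conclude that $\Patch(X)$ is both compact and zero-dimensional, hence Stone by Definition~\ref{defn:stone}, completing the proof.

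The main obstacle here is essentially bookkeeping rather than mathematical depth: one must be careful that the compact open $\top$ of $X$ is genuinely available (this uses the closure-under-finite-meets clause of spectrality, which supplies $B_t = \top$ compact), and that the notion of ``compact open'' transported along $\closednucl{\blank}$ really is the one used in the definition of Stone locale — this is exactly what Corollary~\ref{cor:perfect-maps-are-spectral} guarantees, since $\closednucl{\blank}$ is perfect by Lemma~\ref{lem:eps-perfect}. One should also note, for cleanliness, that $\Patch(X)$ is being treated up to the locally small copy provided by Proposition~\ref{prop:basic-ordering-iff}, so that the predicative notions of compactness and basis apply without universe trouble; this is the standing convention flagged after Proposition~\ref{prop:basic-ordering-iff} and needs no new argument.
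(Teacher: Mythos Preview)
Your proposal is correct and follows essentially the same approach as the paper: zero-dimensionality via Lemma~\ref{lem:patch-zero-dimensional}, and compactness by identifying the top nucleus with $\varepsilon^*(\top_X) = \closednucl{\top_X}$, then using that $\varepsilon^*$ is perfect (Lemma~\ref{lem:eps-perfect}) and hence preserves compactness, together with the compactness of $\top_X$ from spectrality. The only cosmetic difference is that the paper cites Proposition~\ref{prop:perfect-resp-way-below} directly (showing $\varepsilon^*(\top_X) \WayBelow \varepsilon^*(\top_X)$ from $\top_X \WayBelow \top_X$), whereas you invoke its Corollary~\ref{cor:perfect-maps-are-spectral}; these amount to the same step.
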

\begin{proof}
  Zero-dimensionality is given by Lemma~\ref{lem:patch-zero-dimensional} so it
  only remains to show compactness. Recall that the top element
  $\top$ of $\Patch(X)$ is defined as $\top \is \blank \mapsto \top_X$. Because $\varepsilon^*$ is a frame
  homomorphism, it must be the case that $\top = \varepsilon^*(\top_X)$ meaning what we want to
  show is $\varepsilon^*(\top_X) \WayBelow \varepsilon^*(\top_X)$. By
  Proposition~\ref{prop:perfect-resp-way-below}, it suffices to show $\top_X
  \WayBelow \top_X$ which is immediate as spectral locales are compact.
\end{proof}

\subsection{The universal property of the patch construction}\label{subsec:universal}

We now prove the universal property of $\Patch$ corresponding to the fact that
it is the right adjoint to the inclusion $\Stone \hookrightarrow \Spec$. For this purpose, we
use the following lemma, which is not needed in~\cite{patch-short, patch-full}
thanks to the existence of the frame of all nuclei in the impredicative setting.
\begin{lem}\label{lem:extension}
  Let $L, L'$ be two spectral frames and $B$ a small Boolean algebra embedded in
  $L$ such that
  \begin{enumerate}
  \item $L$ is generated by $A$, and
  \item $B$ contains all compact opens of $L$.
  \end{enumerate}
  Then for any lattice homomorphism $h : B \to L'$, there is a unique frame
  homomorphism $\bar{h} : L \to L'$ satisfying $h = \bar{h} \circ \eta$, where $\eta : B \hookrightarrow
  L$ denotes the embedding of $B$ into $L$, as illustrated in the following
  diagram:
  \begin{equation}\label{eqn:diag1}
    \begin{tikzcd}
      B \arrow[dr, swap, "h"]
      \arrow[r, hook, "\eta"] & L \arrow[d, dashed, "\bar{h}"] \\
      & L'
    \end{tikzcd} \tag{\dag}
  \end{equation}

\end{lem}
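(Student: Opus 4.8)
The plan is to show that the hypotheses force $B$ to coincide, via $\eta$, with exactly the poset of compact opens of $L$, and then to run the standard argument realising a spectral frame as the ``ideal completion'' of its lattice of compact opens.

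\emph{Step 1: every $b : B$ is a compact open of $L$.} Fix a small basis $\{ B_i \}_{i : I}$ of $L$ made of compact opens and closed under finite meets and finite joins (possible by spectrality, as remarked after Definition~\ref{defn:spectral-locale}), so in particular $\top_L$ is compact. Given $b : B$, write $\eta(b) = \bigvee \setof{ B_j \mid j \in J }$ and $\neg\eta(b) = \bigvee \setof{ B_j \mid j \in J' }$ via basic covering families, noting that $\neg\eta(b)$, the complement of $\eta(b)$ in $L$, is $\eta$ of the complement of $b$ in $B$, since $\eta$ preserves the bounded-lattice structure. The family of all finite joins of basic opens taken from these two covering families is directed with join $\top_L$, so by compactness of $\top_L$ one such finite join is already $\top_L$; gathering its summands drawn from $J$ into a compact $C \le \eta(b)$ and those from $J'$ into a compact $D \le \neg\eta(b)$ yields $C \vee D = \top_L$ and $C \wedge \neg\eta(b) \le \eta(b) \wedge \neg\eta(b) = \bot_L$, so $C$ is a Boolean complement of $\neg\eta(b)$ and hence $C = \eta(b)$ by uniqueness of complements. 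Thus $\eta(b)$ is compact; combined with hypothesis~(2), this shows that $\eta$ restricts to an equivalence between $B$ and the poset of compact opens of $L$.

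\emph{Step 2: the extension.} Define $\bar{h}(U) \is \bigvee \setof{ h(b) \mid b : B, \eta(b) \le U }$, a join over the small type $\sum_{b : B}(\eta(b) \le U)$ (small because $B$ is small and $L$ is locally small), which therefore exists by small completeness of $L'$; $\bar{h}$ is clearly monotone. It preserves $\top$, since $\top_L = \eta(\top_B)$ and $h(\top_B) = \top_{L'}$. It preserves binary meets: expanding $\bar{h}(U) \wedge \bar{h}(V)$ by frame distributivity in $L'$ and using that $h$ preserves $\wedge$, that $B$ is closed under $\wedge$, and that $\eta(b) \wedge \eta(b') = \eta(b \wedge b')$, one sees that it equals $\bar{h}(U \wedge V)$. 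It preserves small joins: $\bigvee_{i : I} \bar{h}(U_i) \le \bar{h}(\bigvee_{i : I} U_i)$ holds by monotonicity, and conversely any $b : B$ with $\eta(b) \le \bigvee_{i : I} U_i$ is compact by Step~1, hence (closing under finite joins the basic opens that occur in the covering families of the $U_i$, which all lie in $B$ by hypothesis~(2)) satisfies $b \le b_1 \vee \cdots \vee b_k$ for some $b_l : B$ with $\eta(b_l) \le U_{i_l}$, whence $h(b) \le h(b_1) \vee \cdots \vee h(b_k) \le \bar{h}(U_{i_1}) \vee \cdots \vee \bar{h}(U_{i_k}) \le \bigvee_{i : I} \bar{h}(U_i)$ (the empty family being handled by the fact that $h$ preserves $\bot$). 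Hence $\bar{h}$ is a frame homomorphism.

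\emph{Step 3: $\bar{h}$ extends $h$, and is the only such map.} For $b : B$, the index $b$ itself occurs in the join defining $\bar{h}(\eta(b))$, so $h(b) \le \bar{h}(\eta(b))$, while monotonicity of $h$ gives $\bar{h}(\eta(b)) \le h(b)$; hence $\bar{h} \circ \eta = h$. If $g : L \to L'$ is any frame homomorphism with $g \circ \eta = h$, then, since the compact opens of $L$ lie in $B$ and form a basis, every $U : \opens{L}$ equals $\bigvee \setof{ \eta(b) \mid b : B, \eta(b) \le U }$, so $g(U) = \bigvee \setof{ g(\eta(b)) \mid \eta(b) \le U } = \bigvee \setof{ h(b) \mid \eta(b) \le U } = \bar{h}(U)$; therefore $g = \bar{h}$.

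I expect the main obstacle to be the join-preservation clause of Step~2: it is the sole place where compactness of the elements of $B$ — the outcome of Step~1 — is genuinely used, via the manoeuvre of closing a covering family under finite joins to make it directed and then invoking compactness, and it is also where one must be careful that every join in sight is indexed by a small type so that it exists in the predicative setting. Step~1, although it is what makes the lemma true rather than merely a construction, is a short Boolean-algebra computation once the spectral basis is fixed.
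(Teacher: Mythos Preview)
Your proof is correct and follows the same route as the paper: the extension is defined by the identical formula $\bar{h}(x) \is \bigvee\{\, h(b) \mid \eta(b) \le x \,\}$, and compactness of the $\eta(b)$ is the crux of join preservation. Your Step~1 makes explicit a fact the paper's terse proof invokes without justification, and the only organisational difference is that you verify preservation of arbitrary small joins directly, whereas the paper splits this into directed joins (clear) plus binary joins (via the splitting lemma ``$b \le x \vee y$ with $b$ compact implies $b \le c \vee d$ for compact $c \le x$, $d \le y$'').
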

\begin{proof}
  Define $\bar{h}(x) \is \bigvee \setof{ h(b) \mid \eta(b) \le x, b : B }$. We need to show
  that (1) $\bar{h}$ is a frame homomorphism, and (2) is the unique map
  satisfying $h = \bar{h} \circ \eta$.

  \textbf{(1)} It is clear that $\bar{h}$ preserves $\bot$, $\top$, and joins of
  directed families. To show that it preserves binary joins, we make use of the
  fact that for any $b \le x \vee y$ with $b$ compact (in any spectral locale), there
  exist compact opens $c \le x$ and $d \le y$ such that $b \le c \vee d$. As it preserves
  both binary joins and directed joins, it must preserve arbitrary joins.

  \textbf{(2)} It is easy to see that $\bar{h}$ satisfies the equation $h =
  \bar{h} \circ \eta$. Uniqueness follows from the fact that $\eta$ is injective.
\end{proof}

We can now present the universal property.

\begin{thm}\label{thm:main}
  Given any spectral map $f : X \rightarrow A$ from a Stone locale into a spectral locale,
  there exists a unique spectral map $\bar{f} : X \rightarrow \Patch(A)$ satisfying $\varepsilon \circ
  \bar{f} = f$, as illustrated in the following diagram in the category of
  spectral locales:
  \begin{center}
    \begin{tikzcd}
      X \arrow[d, swap, "f"] \arrow[dr, dashed, "\bar{f}"] & \\
      A & \Patch(A) \arrow[l, "\varepsilon"]
    \end{tikzcd}
  \end{center}
\end{thm}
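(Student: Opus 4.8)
The plan is to construct the required continuous map $\bar f : X \to \Patch(A)$ by specifying the frame homomorphism $\bar f^* : \opens{\Patch(A)} \to \opens{X}$ on the clopens of $\Patch(A)$ and extending it by means of Lemma~\ref{lem:extension}. First I would set the stage. By Theorem~\ref{thm:patch-is-stone} the locale $\Patch(A)$ is Stone, hence spectral, and by Lemma~\ref{lem:patch-zero-dimensional} it carries an explicit basis of clopens of the form $\bigvee_{(m,n) \in S} \closednucl{B_m} \wedge \opennucl{B_n}$ with $S$ finite, where $\{ B_i \}_{i : I}$ is the compact-open basis of $A$ (which we may assume contains $\bot$ and $\top$). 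Since $\Patch(A)$ is Stone, its clopens coincide with its compact opens, and by Proposition~\ref{prop:cmp-bsc} every such clopen already lies in this basis; hence the clopens of $\Patch(A)$ form a \emph{small} Boolean algebra $B$ that contains all compact opens of $\opens{\Patch(A)}$ and generates it. Taking $L \is \opens{\Patch(A)}$ and $L' \is \opens{X}$ — the latter spectral because $X$ is Stone, hence spectral — we are in the situation of Lemma~\ref{lem:extension}.

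Next I would define the lattice homomorphism $h : B \to \opens{X}$. On a generator $\closednucl{B_m} \wedge \opennucl{B_n}$ I set $h(\closednucl{B_m} \wedge \opennucl{B_n}) \is f^*(B_m) \wedge \neg f^*(B_n)$; this is legitimate because $f$ is spectral, so $f^*(B_n)$ is a compact open of $X$, hence a clopen (as $X$ is Stone), hence has a Boolean complement $\neg f^*(B_n) : \opens{X}$. One then checks that this prescription descends to a well-defined function on all of $B$ and is a lattice homomorphism (i.e.\ preserves $\bot$, $\top$, and binary meets and joins), using Proposition~\ref{prop:complementation} (that $\opennucl{U}$ is the complement of $\closednucl{U}$), the fact that $f^*$ is a frame homomorphism, and the fact that complementation in the Boolean algebra of clopens of $X$ interchanges finite meets and finite joins. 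Lemma~\ref{lem:extension} then delivers a unique frame homomorphism $\bar f^* : \opens{\Patch(A)} \to \opens{X}$ with $\bar f^* \circ \eta = h$ — concretely, $\bar f^*(j) = \bigvee \setof{ f^*(B_m) \wedge \neg f^*(B_n) \mid m, n : I,\ B_m \le j(B_n) }$ by Lemmas~\ref{lem:johnstones-lemma} and~\ref{lem:last-step} — and hence a continuous map $\bar f : X \to \Patch(A)$.

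It then remains to verify the three assertions. For $\varepsilon \circ \bar f = f$, i.e.\ $\bar f^* \circ \varepsilon^* = f^*$: since $\varepsilon^*(U) = \closednucl{U}$ and $\closednucl{B_n} = \closednucl{B_n} \wedge \opennucl{\bot}$ (because $\opennucl{\bot}$ is the top nucleus), we get $\bar f^*(\varepsilon^*(B_n)) = h(\closednucl{B_n} \wedge \opennucl{\bot}) = f^*(B_n) \wedge \neg f^*(\bot) = f^*(B_n)$ for every basic open $B_n$ of $A$, and since both $\bar f^* \circ \varepsilon^*$ and $f^*$ preserve joins and $A$ has a basis, they agree everywhere. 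For spectrality of $\bar f$: a compact open of $\Patch(A)$ is a clopen, hence of the form $\bigvee_{(m,n) \in S} \closednucl{B_m} \wedge \opennucl{B_n}$, which $\bar f^*$ sends to $\bigvee_{(m,n) \in S} f^*(B_m) \wedge \neg f^*(B_n)$; each summand is the meet of the compact open $f^*(B_m)$ with the clopen $\neg f^*(B_n)$, hence compact, and a finite join of compact opens is compact. For uniqueness: if $\bar g : X \to \Patch(A)$ is spectral with $\varepsilon \circ \bar g = f$, then $\bar g^*$ preserves finite joins and Boolean complements, so $\bar g^*(\closednucl{B_m}) = \bar g^*(\varepsilon^*(B_m)) = f^*(B_m)$ and $\bar g^*(\opennucl{B_n}) = \neg \bar g^*(\closednucl{B_n}) = \neg f^*(B_n)$, whence $\bar g^*$ agrees with $h$ on $B$; the uniqueness clause of Lemma~\ref{lem:extension} then forces $\bar g^* = \bar f^*$.

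The step I expect to be the main obstacle is the construction of $h$ — specifically, showing it is well defined on $B$ and preserves the lattice operations. The subtlety is that distinct formal finite joins $\bigvee_{(m,n) \in S} \closednucl{B_m} \wedge \opennucl{B_n}$ may denote the same nucleus, so one must show that $\bigvee_{(m,n) \in S} f^*(B_m) \wedge \neg f^*(B_n)$ depends only on the resulting clopen; combined with verifying compatibility with binary meets (which requires expanding complements and using distributivity in $\opens{X}$), this is where the hypotheses that $X$ is Stone and $f$ is spectral, together with the explicit description of the clopens of $\Patch(A)$ from Lemma~\ref{lem:patch-zero-dimensional}, must be used carefully. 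Once $h$ is in hand, everything else is a routine application of Lemma~\ref{lem:extension} and of the standard preservation properties of frame homomorphisms.
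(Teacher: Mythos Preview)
Your proposal is correct and follows essentially the same route as the paper: apply Lemma~\ref{lem:extension} with $L = \opens{\Patch(A)}$, $L' = \opens{X}$, $B$ the Boolean algebra of clopens/compact opens of $\Patch(A)$, and $h$ defined on basic clopens by $\closednucl{B_m} \wedge \opennucl{B_n} \mapsto f^*(B_m) \wedge \neg f^*(B_n)$; the verification that $\varepsilon \circ \bar f = f$ via the identity $\closednucl{C} = \closednucl{C} \wedge \opennucl{\bot}$ is exactly the paper's computation. You supply more detail than the paper does --- in particular you spell out the spectrality of $\bar f$ and the uniqueness argument (reducing any competitor $\bar g^*$ to $h$ on $B$ via preservation of complements), which the paper leaves implicit --- and you correctly identify the well-definedness of $h$ as the point that needs care, which the paper dismisses as ``easy to see''.
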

\begin{proof}
  We apply Lemma~\ref{lem:extension} with $L \is \opens{\Patch(A)}$, $L' \is
  \opens{X}$, $B \is \KK(\Patch(A))$ and $h$ defined by
  \begin{equation*}
    h\left(\bigvee_{(j, k) \in J \times K} \closednucl{B_j} \wedge \opennucl{B_k}\right)
      \quad\is\quad \bigvee_{(j, k) \in J \times K} f^*(B_j) \wedge \neg f^*(B_k).
    \end{equation*}
    It is easy to see that $h$ is well-defined, in the sense that if
    the same clopen is expressed in two different ways as a finite
    join of binary meets, then $h$ gives the same value for them. It
    is easy to check that the embedding
    $\KK(\Patch(A)) \hookrightarrow \opens{\Patch(A)}$ satisfies the
    premise of the lemma. We then take $\bar{f}^*$ to be $\bar{h}$ as
    constructed in the lemma.  We need to show that this satisfies
    $\bar{f}^*(\closednucl{U}) = f^*(U)$ for all $U : \opens{A}$. It
    suffices to consider the case where $U$ is a compact open~$C$, as
    the compact opens form a basis. Because $C$ can be written as
    $\bigvee \{ \closednucl{C} \wedge \opennucl{\bot}\}$, we have that
  \begin{equation*}
    \bar{f}^*(\closednucl{C})
  = h\left(\bigvee \{ \closednucl{C} \wedge \opennucl{\bot} \}\right)
  = \bigvee \{ f^*(C) \wedge \neg f^*(\bot) \}
  = \bigvee \{ f^*(C) \wedge \top \}
  = f^*(C),
  \end{equation*}
  as required.
\end{proof}

\section{Summary and discussion} \label{conclusion}

We have constructed the patch locale of a spectral locale in the predicative and
constructive setting of univalent type theory, using only propositional and
functional extensionality and the existence of quotients. Furthermore, we have
shown that the patch construction $\Patch : \Spec \rightarrow \Stone$ is the right adjoint
to the inclusion $\Stone \hookrightarrow \Spec$ which is to say that patch exhibits the
category $\Stone$ as a coreflective subcategory of $\Spec$.

As we have elaborated in Section~\ref{sec:spec-and-stone}, answering this
question in a predicative setting has involved the reformulation of several
fundamental concepts of locale theory. In particular, we have reformulated
notions of spectrality, zero-dimensionality, and regularity, and have shown that
crucial facts about these notions remain valid in the predicative setting.

We have also formalised almost all of our development, most
importantly Theorem~\ref{thm:patch-is-stone} and
Lemma~\ref{lem:extension}. The formalisation has been carried out by
the first-named author as part\footnote{The HTML rendering of the
  \textsf{Agda} code can be browsed at
  \url{https://www.cs.bham.ac.uk/~mhe/TypeTopology/Locales.index.html}}
of the second-named author's \texttt{TypeTopology}
library~\cite{typetopology}. Almost all of the presented results have
already been implemented, including:
\begin{enumerate}
  \item All of Section~\ref{sec:spec-and-stone} in the module
    \ttmodule{Locales.CompactRegular};
  \item The Adjoint Functor Theorem and its application to define Heyting
    implications in frames (Section~\ref{sec:aft}) in modules
    \ttmodule{Locales.GaloisConnection},
    \ttmodule{Locales.AdjointFunctorTheoremForFrames}, and
    \ttmodule{Locales.HeytingImplication};
  \item All of Section~\ref{sec:meet-semilattice} and Section~\ref{sec:joins}
    in module \ttmodule{Locales.PatchLocale}; and
  \item The extension lemma (Lemma~\ref{lem:extension}) from
    Section~\ref{subsec:universal} in \ttmodule{Locales.BooleanAlgebra}.
\end{enumerate}
The only result that remains to be formalised is the universal property which we
have proved using Lemma~\ref{lem:extension}. The formalisation of this
result is work in progress and is soon to be completed.

In previous work~\cite{patch-short, patch-full}, that forms the basis of the
present work, the patch construction was used to
\begin{enumerate}
  \item exhibit $\Stone$ as a coreflective subcategory of $\Spec$, which we have
    addressed here,
    \label{item:coref-1} and
  \item exhibit the category of compact regular locales and continuous maps as a
    coreflective subcategory of of stably compact locales and perfect maps,
    which we leave for future work. \label{item:coref-2}
\end{enumerate}
Coquand and Zhang~\cite{coquand-zhang} tackled (\ref{item:coref-2}) using formal
topology. We conjecture that it should be possible to instead use the approach
we have presented here, namely, working with locales with small bases and
constructing the patch as the frame of Scott continuous nuclei.

\bibliographystyle{entics}
\bibliography{references}

\end{document}